\theoremstyle{definition}
\theoremstyle{remark}
\theoremstyle{plain}
\newtheorem{theorem}{Theorem}[section]
\newtheorem{lemma}[theorem]{Lemma}
\newtheorem{proposition}[theorem]{Proposition}
\newtheorem{corollary}[theorem]{Corollary}
\def\keywords#1{{\vskip4pt
\noindent
\hbox to59.5pt{KEY\enspace WORDS:\quad\hss}\vtop{\advance \hsize by -59.5pt
\leftskip=28pt \rightskip=0pt
\noindent\ignorespaces#1\vskip8pt}}}
\let\runauthor\@author
\let\runtitle 
\title{Iterative Scaling in Curved Exponential Families}
\author{Anna Klimova \\ 
Institute of Science and Technology (IST) Austria
\and Tam\'{a}s Rudas \\
E\"{o}tv\"{o}s Lor\'{a}nd University, Budapest, Hungary}
\date{}
\begin{document}

\begin{titlepage}
\thispagestyle{empty}
{ \hfill Iterative scaling in curved families}

\begin{center}
~\\[5cm]
\textsc{\large{Iterative Scaling in Curved Exponential Families}}\\[1.5cm]


\large{Anna {Klimova} ~\\

Institute of Science and Technology (IST) Austria, ~\\
Klosterneuburg, Austria

\vspace{1cm}

Tam\'{a}s {Rudas} ~\\
E\"{o}tv\"{o}s Lor\'{a}nd University, ~\\
Budapest, Hungary}

\date{}

\end{center}

\end{titlepage}

\newpage


\begin{abstract}
The paper describes a generalized iterative proportional fitting procedure which can be used for maximum likelihood estimation in a special class of the general log-linear model. The models in this class, called relational,  apply to multivariate discrete sample spaces which do not necessarily have a Cartesian product structure and may not contain an overall effect.  When applied to the cell probabilities, the models without the overall effect are curved exponential families and the values of the sufficient statistics are reproduced by the MLE only up to a constant of proportionality. The paper shows that Iterative Proportional Fitting, Generalized Iterative Scaling and Improved Iterative Scaling, fail to work for such models. The algorithm proposed here is based on iterated  Bregman projections. As a by-product, estimates of the multiplicative parameters are also obtained.

\keywords{Bregman divergence, contingency tables, curved exponential family, generalized odds ratio, 
iterative proportional fitting, maximum likelihood estimate, overall effect, relational model}

\end{abstract}


\baselineskip=18pt

\section*{Introduction}

This paper deals with variants of the general log-linear model
\begin{equation}\label{genll}
\log \boldsymbol{\delta} = \textbf{A}' \boldsymbol{\beta},
\end{equation}  
where $\boldsymbol{\delta}$ denotes a vector of probabilities (appropriate in the case of  multinomial sampling) or a vector of intensities (appropriate in the case of Poisson sampling) and the model matrix $\mathbf A$ has non-negative integer entries.

When the sample space is the Cartesian product of the ranges of categorical variables and the rows of $\textbf{A}$ are indicators  of cylinder sets of the sample space, (\ref{genll}) specifies a conventional log-linear model \citep*[cf.][]{BFH}. When the sample space does not have a  Cartesian product structure and $\textbf{A}$ is the indicator matrix of arbitrary subsets of the cells in the sample space,  (\ref{genll}) defines a relational model \citep*{KRD11}.

Models that associate parameters with subsets of cells appear, for instance, in  areas of machine learning which deal with feature selection. Features are characteristics of objects, and a subset of cells comprises the objects that possess a particular feature. The goal of the analysis is to choose features so that a Markov field  based on them approximates the observed distribution well, see, e.g., \cite*{LaffertyBregman99,{LaffertyMccallumPereira}, Malouf, Huang2010}. Maximum entropy models used in machine learning, see, e.g., \cite{LaffertyMccallumPereira}, are special cases of relational models. Feature selection techniques are used, among others,  in text processing \citep*[cf.][]{Mccallum2000}, in computer tomography \citep[cf.][]{OSullivan}, and in the analysis of social mobility \citep[cf.][]{KRbm}.

Many problems in feature selection allow for the existence of unaffected cases, i.e., objects who do not possess any of the characteristics of interest. However, in some contexts of feature selection, such an assumption may not be feasible. For example, in market basket analysis \citep*[cf.][]{Brin1997,Wu2003}, where records of purchases are analyzed to reveal patterns of associations among the different goods bought, each purchase consists of one item, at least.  Similarly, a registry of congenital abnormalities \citep[cf.][]{KallenRegistry, Anomalies2001Boyd, Anomalies2007} lists only affected newborns.  While in the case of birth defects, unaffected newborns exist and their total number may be known, in market basket analysis there is no purchase with nothing bought. In such problems, testing hypotheses of association between features, e.g., independence, cannot be performed using conventional log-linear models, but may be done within the relational model framework \citep{KRD11}.

In Section \ref{RMchapter}, the formal definition and the main properties of relational models are reviewed. The characteristics of model (\ref{genll}) are affected by the presence or absence of  the vector of $1$'s, denoted in the sequel as $\boldsymbol{1}$,  in the row space $R(\textbf{A})$ of the model matrix.  If $\boldsymbol{1} \in R(\textbf{A})$, there exists a parameterization of the model in which one of the parameters appears in every cell; such a model is said to have the overall effect. Relational models for probabilities with the overall effect and relational models for intensities  are regular exponential families, and standard results about the MLE in such families apply. Relational models for probabilities without the overall effect are curved exponential families, and some properties of the MLE are fundamentally different.   

In Section \ref{GISIISsection}, three  iterative scaling algorithms used for models of type (\ref{genll}) are reviewed and their applicability for determining the MLE under relational models is investigated. The iterative proportional fitting (IPF) procedure   \citep{DemingStephan} is employed for conventional log-linear models \citep[cf.][]{BFH} and can be easily modified to suit relational models with the overall effect. Generalized Iterative Scaling (GIS), proposed by \cite{DarrochRatcliff}, is used in feature selection and works under an assumption which is sometimes called ``a constant sum of features''. It is shown, that this assumption is equivalent to the presence of the overall effect in the model, and thus GIS can only be applied to relational models with the overall effect. Improved Iterative Scaling (IIS) was proposed by \cite*{DDL1997} as a generalization of GIS which does not rely on the assumption of constant sum of features. One might expect that IIS could be used for relational models without the overall effect, but it is shown that this is not the case.         

The main contribution of the paper, described in Section \ref{MLEipf}, is an iterative proportional fitting procedure that generalizes the traditional IPF and can be used for all relational models, with or without the overall effect. The algorithm constructs a sequence of projections on convex sets, and, while for the traditional IPF procedure the projections minimize the Kullback-Leibler distance, the projections for the new algorithm minimize the Bregman distance \citep{Bregman}. Iterative proportional fitting is typically used for estimating the cell parameters and, to the best knowledge of the authors, it is not used to estimate the model parameters. It is also shown here, that the estimates of the model parameters can be found as a by-product of the IPF procedure. A numerical variant of the generalized IPF is also described, and it is proven that its output approximates the true limit with any specified precision.


 
\section{Preliminaries} \label{RMchapter} 

Let $Y_1, \dots, Y_K$ be  random variables taking values in finite sets $\mathcal{Y}_1, \dots, \mathcal{Y}_K$, respectively. A combination of values $(y_1, y_2,\dots,y_K) \in \mathcal{Y}_1 \times \dots \times \mathcal{Y}_K$ is called a cell, and a 
 sequence $\mathcal{I}$ of cells is a table. The population distribution is parameterized by $\boldsymbol \delta =\{\delta(i),\,\,\mbox{for } i \in \mathcal{I}\}$. Depending on whether the population is characterized by probabilities  or by intensities, $\delta(i) \equiv p(i) \in (0,1)$, with $\sum_{i \in \mathcal{I}} p(i) = 1$, or $\delta(i) \equiv \lambda(i) > 0$, for all $i \in \mathcal{I}$. The distinction between data collected through multinomial or Poisson sampling procedures is necessary, because maximum likelihood estimates under the two sampling schemes may have fundamentally different properties under the models considered here.  

Let $\mathcal{P}$ be the set of positive distributions, parameterized by $\boldsymbol \delta$, on $\mathcal{I}$. {A  relational model} is generated by a class $\mathbf{S} = \{S_1, \dots, S_J\}$ of non-empty subsets of the table $\mathcal{I}$. A distribution ${P}_{\boldsymbol \delta} \in \mathcal{P}$ is in the model if and only if
\begin{equation} \label{PMmatr}
\mbox{log } \boldsymbol \delta = \mathbf{A}'\boldsymbol \beta, \, \mbox{for some } \, \boldsymbol \beta \in \mathbb{R}^J.
\end{equation}
Here, the components of $\boldsymbol \beta$ are the log-linear parameters of the model and the rows of the model matrix $\mathbf{A}$ are indicators of the generating subsets.  

A relational model is also an exponential (multiplicative) family 
\begin{equation}\label{dualF2}
\mathcal{M}(\mathbf{A}) = \{ P_{\boldsymbol \delta} \in \mathcal{P}: \,\, \boldsymbol \delta = {\boldsymbol \theta}^{\mathbf{A}'}, \mbox{for some } \boldsymbol \theta \in \mathbb{R}^J_{>0}\},
\end{equation}
where, for every $i \in \mathcal{I}$, $\,\delta(i) = \prod_{j=1}^J \theta_j^{a_{ji}}$, and the components of $\boldsymbol \theta$ are the multiplicative parameters.

The overall effect, present in every cell, plays the role of the normalizing constant, say $\beta_0$, which is often included in exponential family models:
$$\boldsymbol \delta = \frac{1}{\beta_0}\mbox{exp }\{\mathbf{A}'\boldsymbol \beta\}.$$
Such models can be re-written in the form (\ref{PMmatr}) by adding a row of $1$'s to $\mathbf{A}$. Therefore, relational models do not assume that there is no normalizing constant, they only allow for this possibility. 

A dual representation of a relational model is obtained using  an integer kernel basis matrix $\mathbf{D}$ \citep{KRD11}, whose rows are a basis of $Ker(\mathbf{A})$. Thus, (\ref{dualF2}) can be re-written as 
\begin{equation}\label{dualF2d}
\mathcal{M}(\mathbf{A}) = \{ P_{\boldsymbol \delta} \in \mathcal{P}: \,\, \mathbf{D} \mbox{log }\boldsymbol \delta = \boldsymbol 0\}.
\end{equation}
The dual representation says that the generalized odds ratios,
\begin{equation} \label{defORd}
\mathcal{OR}_{\boldsymbol d} = \boldsymbol \delta ^{\boldsymbol {d^+}}/\boldsymbol \delta ^{\boldsymbol {d^-}},
\end{equation}
associated with the rows of $\mathbf{D}$, are all equal to 1, 
where ${\boldsymbol {d^+}}$ and ${\boldsymbol {d^-}}$ denote, respectively, the positive and negative parts of row $\boldsymbol d$. If the degrees of the numerator and of the denominator in (\ref{defORd}) are equal, the odds ratio $\mathcal{OR}_{\boldsymbol d}$ is called homogeneous and, otherwise it is called non-homogeneous. For a relational model with the overall effect, all generalized odds ratios are homogeneous, and it can be shown that any model without the overall effect has a dual representation with exactly one non-homogeneous odds ratio. 

Some of the properties of maximum likelihood estimates under relational models are also affected by presence or absence of the overall effect. Let $\mathbf{Y}$ be a random variable that has a distribution parameterized by $\boldsymbol \delta$ and let $\boldsymbol y$ be a realization of $\mathbf{Y}$. Assume that the MLE $\hat{\boldsymbol \delta}$, under the relational model generated by $\mathbf{A}$, exists. The properties of the MLE in $\mathcal{M}(\mathbf{A})$ are best described by using the linear family
$$
\mathcal{L}_{\boldsymbol \delta}(\mathbf{A}, \boldsymbol q, \gamma) = \{ P_{\boldsymbol \delta} \in \mathcal{P}: \,\, \mathbf{A} \boldsymbol \delta = \gamma \mathbf{A} {\boldsymbol q}\},
$$
where
\begin{equation}\label{b}
\boldsymbol q = \left\{\begin{array}{ll} \boldsymbol y, & \mbox{if } \boldsymbol \delta \equiv \boldsymbol \lambda, \\
\boldsymbol y/ (\boldsymbol 1\boldsymbol y), & \mbox{if } \boldsymbol \delta \equiv \boldsymbol p.
 \end{array}\right.
\end{equation}
As was shown by \cite{KRD11}, when the relational model is a regular exponential family, the distribution parameterized by $\hat{\boldsymbol \delta}$ is the unique point in  
\begin{equation}\label{linF1}
\mathcal{M}(\boldsymbol A) \cap \mathcal{L}_{\boldsymbol \delta}(\mathbf{A}, \boldsymbol q, 1),
\end{equation}
and, when the relational model is a curved exponential family,  the distribution parameterized by $\hat{\boldsymbol \delta}$ is the unique common point in
\begin{equation}\label{linF1c}
\mathcal{M}(\boldsymbol A) \cap \mathcal{L}_{\boldsymbol \delta}(\mathbf{A}, \boldsymbol q, \gamma),
\end{equation}
for the unique $\gamma > 0$ for which $\boldsymbol 1\hat{\boldsymbol \delta} = 1$. The components of $\mathbf{A}\boldsymbol \delta$ are called the subset sums of $P_{\boldsymbol \delta}$. The coefficient of proportionality $\gamma$ is called the adjustment factor.  The properties of maximum likelihood estimates under relational models are summarized in Table \ref{ModelTypes}.

\begin{center}  *** Table \ref{ModelTypes} here.  *** \end{center}

In general, there are no closed form expressions for the MLE. Maximum likelihood estimates for the cell parameters can be computed using the Newton-Raphson algorithm or    algorithms  for convex optimization \citep[cf.][]{BertsekasNLP, AitchSilvey60, EvansForcina11}. This paper focuses on the iterative scaling approach to computing the MLE, as it is often used in feature selection procedures \citep[cf.][]{Huang2010}.

\section{Why IPF, GIS, IIS cannot be used for relational models?} \label{GISIISsection}

The traditional iterative proportional fitting (IPF) procedure is used for computing maximum likelihood estimates of the cell frequencies or probabilities under log-linear models \citep[cf.][]{BFH}. The IPF algorithm starts with a contingency table of the same structure as the sample space, with all cell frequencies equal to one: ${\boldsymbol \delta}^{(0)}=\boldsymbol 1$. The cell frequencies are then adjusted until the marginal sums $A_{j}\boldsymbol{\delta}^{(d)}$, where $d+1 \equiv j \mbox{ mod } J$, become equal or close enough to the observed values: 
\begin{equation}\label{classicIPF} 
\delta^{(d+1)}(i) = \delta^{(d)}(i) \left(\frac{A_{j}\boldsymbol{q}}{A_{j}\boldsymbol{\delta}^{(d)}}\right)^{a_{ji}}, \,\,  \mbox{for  all } i \in \mathcal{I}.
\end{equation}
The model structure, expressed in terms of the odds ratios, is preserved during iterations.
Given that the MLEs of the cell parameters exist, the sequence $\boldsymbol \delta^{(d)}$ converges, as $d \to \infty$, to the maximum likelihood estimate $\hat{\boldsymbol \delta}$. 
The proof of convergence of IPF relies on a particular order of the subsets. In this order, the cylinder sets associated with the $k$-th marginal are listed in one block indexed from $j=[k]_s$ to $j=[k]_e$. Then the distribution parameterized by $\delta^{(tJ+[k]_e)}$ is the I-projection of the distribution parameterized by $\delta^{(tJ+[k-1]_e)}$ in $\cap_{j=[k]_s}^{[k]_e}\mathcal{L}_{\boldsymbol \delta}(A_{j}, \boldsymbol q, 1)$, for every non-negative integer $t$, see \cite{Csiszar}. 

A straightforward modification of IPF, which updates, instead of marginal totals, the subset sums, may be considered for computing the MLEs of the cell parameters under relational models.  This will  work when the overall effect is present in the model, however, the proof needs modification, because in the actual parameterization, there may be no sets $S_j$ that form a partitioning of the sample space, like $S_{[k]_s}, \ldots , S_{[k]_e}$ do, and, therefore, the algorithm may not produce I-projections. When the overall effect is not present,  (\ref{classicIPF}) works for intensities, but as the total is not reproduced by the MLE, a new proof is needed, see Section \ref{MLEipf}.


The generalized iterative scaling (GIS) algorithm was proposed by \cite{DarrochRatcliff} for maximum likelihood estimation in discrete exponential families of the form $\mbox{log }\boldsymbol p = \mathbf{A}'\boldsymbol \beta$, where  $\mathbf{A}$ is a non-negative real matrix whose rows add to the vector of $1$'s:
\begin{equation}\label{star}
A_1 + \dots + A_J = (\sum_{j = 1}^J a_{j1},\dots, \sum_{j = 1}^J a_{j|\mathcal{I}|}) = \boldsymbol 1.
\end{equation}
The GIS procedure organizes updating cycles not according to subsets (or marginals), rather according to cells. It performs all adjustments that apply to a cell in one step, and recalculates the multipliers to be used in the next cycle of adjustment only after all cells were updated: 
\begin{equation} \label{DRIter} 
p^{(n+1)} ({i}) = p^{(n)}( i) \prod_{j =1}^J \left( \frac{ A_j\boldsymbol q}{ A_j\boldsymbol p^{(n)}}\right)^{a_{ji}}, \hspace{2mm}  \mbox{ for all } i \in \mathcal{I}.
\end{equation}
As $n \to \infty$, $\boldsymbol p^{(n)}$ converges to the maximum likelihood estimate $\hat{\boldsymbol p}$, see Theorem 1 in \cite{DarrochRatcliff}.  The condition (\ref{star}) is used in the proof of convergence, but it can be shown that GIS also applies to relational models with a model matrix whose rows sum to a constant vector:
\begin{equation}\label{starr}
A_1 + \dots + A_J = c\boldsymbol 1.
\end{equation}
In fact, (\ref{starr}) and the presence of the overall effect are equivalent:

\begin{proposition}\label{featuresVSoverEff}

For a 0-1 matrix $\boldsymbol A$, the following two conditions are equivalent:
\begin{enumerate}[(i)]
\item \label{i1r} $\boldsymbol{1} \in R(\mathbf{A})$.
\item \label{ii1r} There exists a non-negative matrix $\tilde{\mathbf{A}}$, with $R(\tilde{\mathbf{A}}) = R(\mathbf{A})$, whose rows add to a vector $ c\boldsymbol 1$, for some $c \in \mathbb{Z}_{+}$.
\end{enumerate}
\end{proposition}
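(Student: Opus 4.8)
The plan is to prove the two implications separately; the only real content is the construction of $\tilde{\mathbf A}$ needed for \ref{i1r} $\Rightarrow$ \ref{ii1r}, and even that is short.

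For \ref{ii1r} $\Rightarrow$ \ref{i1r}: if the rows of $\tilde{\mathbf A}$ add to $c\boldsymbol 1$, then $c\boldsymbol 1$ is a sum of rows of $\tilde{\mathbf A}$, so $c\boldsymbol 1 \in R(\tilde{\mathbf A}) = R(\mathbf A)$, and since $R(\mathbf A)$ is a linear subspace, dividing by $c$ yields $\boldsymbol 1 \in R(\mathbf A)$. The only degenerate case is $c = 0$, which would force every row of the non-negative matrix $\tilde{\mathbf A}$, hence every row of $\mathbf A$, to vanish, so that $R(\mathbf A) = \{\boldsymbol 0\}$; this does not occur for a relational model, whose generating subsets are non-empty, and I would simply take $c \ge 1$.

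For \ref{i1r} $\Rightarrow$ \ref{ii1r}: write $A_1, \dots, A_J$ for the rows of $\mathbf A$ and let $\boldsymbol s = A_1 + \dots + A_J$ be the vector of column sums. Since $\mathbf A$ has entries in $\{0,1\}$, the vector $\boldsymbol s$ has non-negative integer entries, and $\boldsymbol s \in R(\mathbf A)$ because it is a sum of rows. Let $c$ be the largest entry of $\boldsymbol s$, which is a positive integer (positive because $\boldsymbol 1 \in R(\mathbf A)$ forces $\mathbf A \neq \boldsymbol 0$), and put $\boldsymbol r = c\boldsymbol 1 - \boldsymbol s$. Then $\boldsymbol r$ has non-negative integer entries by the choice of $c$, and $\boldsymbol r \in R(\mathbf A)$ because $\boldsymbol 1 \in R(\mathbf A)$ by hypothesis and $\boldsymbol s \in R(\mathbf A)$. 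I would then take $\tilde{\mathbf A}$ to be $\mathbf A$ with the extra row $\boldsymbol r$ appended: it is non-negative, its row space is still $R(\mathbf A)$ because the appended row already lies in the span of $A_1, \dots, A_J$, and the sum of all its rows is $\boldsymbol s + \boldsymbol r = c\boldsymbol 1$, which is \ref{ii1r}.

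I do not expect a genuine obstacle here; the only points to check in the last step are that $\boldsymbol r \ge \boldsymbol 0$ (immediate once $c$ is the maximal entry of $\boldsymbol s$), that appending a vector already in $R(\mathbf A)$ leaves the row space unchanged, and that $c$ is an integer. I would also point out that the statement imposes no constraint on the number of rows of $\tilde{\mathbf A}$, so enlarging $\mathbf A$ by one row is legitimate; should a matrix with exactly $J$ rows be wanted, the mass of $\boldsymbol r$ could instead be distributed among the existing rows, but this refinement is unnecessary.
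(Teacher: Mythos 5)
Your proof is correct, but your construction for (\ref{i1r}) $\Rightarrow$ (\ref{ii1r}) is genuinely different from the paper's. You append a single extra row $\boldsymbol r = c\boldsymbol 1 - (A_1+\dots+A_J)$ with $c$ the maximal column sum --- this is exactly the ``slack feature'' device that the paper discusses informally right after the proposition, here promoted to the proof itself. The paper instead keeps the number of rows equal to $J$: assuming without loss of generality $A_1=\boldsymbol 1$, it passes to the telescoping differences $A_j^*=A_j-A_{j+1}$ (with $A_J^*=A_J$), whose rows sum to $\boldsymbol 1$, and then shifts each row by $-\min_i a^*_{ji}$ times $\boldsymbol 1$ to restore non-negativity, which changes the constant row sum from $1$ to $c=1-\sum_j a_j^*$. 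Your version is shorter, needs no WLOG normalization of a row to $\boldsymbol 1$, and makes the link to slack features explicit; the paper's version buys a reparameterization with the same number of generating subsets, which is closer in spirit to ``the same model in a different parameterization'' (though, as you correctly note, the proposition places no constraint on the number of rows, and your remark that $\boldsymbol r$ could be redistributed among the existing rows recovers a $J$-row variant). Your treatment of (\ref{ii1r}) $\Rightarrow$ (\ref{i1r}) matches the paper's, and your explicit handling of the degenerate case $c=0$ (forcing $\tilde{\mathbf A}=\boldsymbol 0$ by non-negativity) is a point the paper leaves implicit; all the checks you list ($\boldsymbol r\ge\boldsymbol 0$, $c\in\mathbb{Z}_{+}$, invariance of the row space under appending a row already in it) go through.
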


\begin{proof}

Suppose (\ref{i1r}) holds. Without loss of generality, assume that $A_1 = \boldsymbol 1$. The matrix $\mathbf A^*$ consisting of  rows
\begin{align} \label{NewBasis}
&A_1^* = A_1 - A_2, \quad A_2^* = A_2 - A_3,  \quad \dots,  \quad A_{J-1}^* = A_{J-1} - A_{J},  \quad A_J^* = A_J,
\end{align}
has the same row space as $\mathbf A$. The rows of $\mathbf A^*$ sum to $\boldsymbol 1$, but its entries 
may not be all positive. Let $a_j^* = \underset{i\in\mathcal{I}}{\operatorname {min}}(a_{ji}^*)$, for $j = 1, \dots, J$. 

The matrix $\tilde{\mathbf{A}}$ with rows $\tilde{A}_j = A_j^* - a_j^* \boldsymbol 1
$ is nonnegative, has the same row space as $\mathbf A$, and the rows of $\tilde{\mathbf{A}}$ sum to  
$c\boldsymbol 1$, where $c =1 -a_1^* - a_2^* - \ldots - a_J^* \in \mathbb{Z}_{+}$.

Suppose (\ref{ii1r}) holds. Then $\boldsymbol 1$ is a linear combination of some vectors in $R(\mathbf{A})$ and thus $\boldsymbol 1 \in R(\mathbf{A})$. The proof is now complete.
\end{proof}

In feature selection procedures, before GIS is applied, a model matrix with a non-constant sum of rows is often converted to a matrix with a constant row sum by adding a "slack feature", which balances other rows, to the model \cite[cf.][]{LaffertyMccallumPereira, Mccallum2000}. By Proposition \ref{featuresVSoverEff}, a model with a slack feature always has the overall effect. Adding a slack feature to a model without the overall effect  changes the model. 

Another algorithm, Improved Iterative Scaling (IIS),  proposed by \cite{DDL1997}, claims that it ``is an improvement of the Generalized Iterative Scaling algorithm of Darroch and Ratcliff in that it does not require that the features sum to a constant''. The IIS algorithm cyclically updates the cells of the table, and each iteration consists of two steps - updating the multipliers and updating the cell probabilities:
\begin{enumerate}
\item For each $j = 1, \dots, J$, solve
\begin{equation} \label{LaffIterSystem}
\sum_{i \in \mathcal{I}}a_{ji}p^{(n)}(i) (\zeta_j^{(n)})^{ a(i)} = A_j\boldsymbol q, \mbox{ where } a(i) = \sum_{j = 1}^J a_{ji},
\end{equation}
to compute $\boldsymbol \zeta^{(n)} = (\zeta^{(n)}_1, \dots,\zeta^{(n)}_J)'$.
\item Compute  
\begin{equation} \label{LaffIter}
p^{(n+1)} ({i}) = p^{(n)}( i) \prod_{j =1}^J (\zeta_j^{(n)})^{a_{ji}},
\end{equation}
and set $n = n+1$.
\end{enumerate}
In the case when all of $a(i) = \sum_{j = 1}^J a_{ji}$ are equal to 1, (\ref{LaffIter}) reduces to the updating step (\ref{DRIter}) of the GIS procedure. 

The literature on the IIS algorithm seems implicit about whether or not the model matrix has a row of  $1$'s. For example,  \cite{DDL1997} and \cite{Bancarz2002} mention that a normalizing constant is included in the model, however \cite{LaffertyBregman99} do not explicitly say it, but prove that the sequence $\boldsymbol p^{(n)}$ converges to the MLE. In fact, if the model has the overall effect, it does not matter whether (\ref{star}) or (\ref{starr}) holds in the actual parameterization. But if the model does not contain the overall effect, IIS may converge to a vector of probabilities that does not sum to 1 and thus can not be the MLE. Normalization of the limit vector would not help: a relational model without the overall effect is not scale invariant \citep{KRD11}, that is, a non-normalized vector of probabilities, that has the multiplicative structure prescribed by the model, looses this structure after being normalized. For example, the model of independence between three features  $A$, $B$, or $C$  is defined by  \cite{AitchSilvey60} as 
\begin{equation}\label{indepAtr}
\frac{p_{AB}}{p_{A}p_{B}} = 1, \,\,\frac{ p_{AC}}{p_{A}p_{C}} = 1, \,\, \frac{p_{BC}}{p_{B}p_{C}} = 1, \,\,\frac{p_{ABC}}{p_{A}p_{B}p_{C}} = 1.
\end{equation}
Here $p_A$, $p_B$, $p_C$, $p_{AB}$, $p_{AC}$, $p_{BC}$, $p_{ABC}$ denote positive probabilities of having the corresponding combination of features, and $p_A+p_B+p_C+p_{AB}+p_{AC}+p_{BC}+p_{ABC} = 1$. Let $\boldsymbol p = (p_A, p_B, p_C, p_{AB}, p_{AC}, p_{BC}, p_{ABC})'$. 
This is a relational model generated by the subsets 
$S_1$ (possessing the feature $A$), $S_2$ (possessing the feature $B$), and $S_3$ (possessing the feature $C$) and is a variant of independence when no unaffected cases exist. It is defined using non-homogeneous odds ratios in (\ref{indepAtr}) and thus does not have the overall effect. 
Let 
$\boldsymbol q=(0.04, 0.04, 0.04, 0.04, 0.04, 0.24, 0.56)'$ be the parameter of the observed probability distribution.  The vector sequence $\boldsymbol p^{(n)}$, produced by IIS, converges to $\boldsymbol p^*=(0.3202, 0.4574, 0.4574, 0.1464, 0.1464, 0.2092, 0.0670)'$, with the total $\boldsymbol 1 \boldsymbol p^*=1.804$. The normalized vector:
$$\boldsymbol p^*_1 =\boldsymbol p^*/(\boldsymbol 1 \boldsymbol p^*) = (
0.1775, 0.2535, 0.2535, 0.0812, 0.0812, 0.1160, 0.0371)'$$ does not have the multiplicative structure implied by (\ref{indepAtr}).

In order to illustrate that this problem is not isolated, rather quite common, the histogram of sums of the limit vectors obtained for 13352  observed distributions $\boldsymbol q$, generated with equal spacing on the parameter space, is given in Figure \ref{sumsIIS}.  

 \begin{center}   ***  Figure 1 here.  ***  \end{center}

In summary, a relational model with the overall effect can be parameterized using a model matrix whose rows do or do not sum to a constant vector. Depending on the parameterization, IPF, GIS or IIS can be used to compute the MLE under such a model. However, a relational model without the overall effect, under any parameterization, will have a model matrix whose rows do not sum to a constant vector, and, if the model is for probabilities, IPF, GIS, IIS do not converge to the MLE. 

The next section presents an iterative fitting procedure that converges to the MLE, whether or not the overall effect is present in the relational model.

\section{Iterative proportional fitting in relational models}\label{MLEipf}

Let $\mathbf{Y}$ be a random variable that has a Poisson distribution parameterized by $\boldsymbol \delta \equiv \boldsymbol \lambda$ or a multinomial distribution parameterized by $N$ and $\boldsymbol \delta \equiv \boldsymbol p$. Consider the relational model generated by a matrix $\mathbf{A}$, and  let $\mathbf{D}$ be a kernel basis matrix of it. Suppose $\boldsymbol y$ is a realization of $\mathbf{Y}$. Consider any  $\gamma > 0$, such that   $\mathcal{L}_{\boldsymbol \delta}(\mathbf{A}, \boldsymbol q, \gamma)$, with $\boldsymbol q$ given in (\ref{b}),  is not empty.

An iterative proportional fitting algorithm that computes the cell parameters, $\boldsymbol \delta_{\gamma}^*$, and the model parameters for the distribution that is the unique common point of the linear family $\mathcal{L}_{\boldsymbol \delta}(\mathbf{A}, \boldsymbol q, \gamma)$  and the multiplicative family $\mathcal{M}(\mathbf{A})$ is given next. 
\vspace{1mm}



\begin{center} \textbf{IPF($\gamma$) Algorithm:} \end{center}
\noindent {Set} $d = 0$; \,\, ${\delta}_{\gamma}^{(0)}(i) = 1$ for all $i \in \mathcal{I}$; \,\, $\theta_{\gamma}^{(0)}(j) = 1$ for all $j = 1, \dots, J$, and proceed as follows.
\begin{enumerate}
\item[] {\tt Step 1}: {Find} $j \in \{1,2,\dots,J\}$, such that $d+1 \equiv j \mbox{ mod } J$;
\item[] {\tt Step 2}: {Compute}  
\begin{eqnarray} 
\delta_{\gamma}^{(d+1)}(i) &=& \delta_{\gamma}^{(d)}(i) \left(\gamma \frac{A_{j}\boldsymbol{q}}{A_{j}\boldsymbol{\delta}_{\gamma}^{(d)}}\right)^{a_{ji}} \,\, \mbox{for all } i \in \mathcal{I};   \label{RipfGamma} \\
\theta_{\gamma}^{(d+1)}(j) &=& \theta_{\gamma}^{(d)}(j)\frac{\delta_{\gamma}^{(d+1)}({i}^*)}{\delta_{\gamma}^{(d)}({i}^*)}, \,\, \mbox{where } i^* \in \mathcal{I}: \,\,  a_{ji^*} >0.  \label{IPFpapamCompute}
\end{eqnarray}
\item[] {\tt Step 3}: While  $\gamma A_{j}\boldsymbol{q} \neq A_{j}\boldsymbol{\delta}_{\gamma}^{(d+1)}$ for at least one $j$,  set $d = d+1$, go to {\tt Step 1}.
\item[] {\tt Step 4}: Set $\boldsymbol{\delta}_{\gamma}^{*}=\boldsymbol{\delta}_{\gamma}^{(d)}$, and finish. \qed
\end{enumerate}

\vspace{3mm}

The proof of convergence of IPF($\gamma$) is based on showing that IPF($\gamma$) is an instantiation of the algorithm proposed by \cite{Bregman} to find a common point of convex sets.

Let $\boldsymbol t, \boldsymbol u \in \mathbb{R}^{|\mathcal{I}|}_{>0}\,$ and let $D(\boldsymbol t|| \boldsymbol u)$ denote the Bregman divergence associated with the function $F(\boldsymbol x) = \sum_{i \in \mathcal{I}} x(i) \mbox{log }x(i)$:
\begin{equation} \label{BDdef} 
D(\boldsymbol t|| \boldsymbol u) = \sum_{i \in \mathcal{I}} t(i) \mbox{log }(t(i)/u(i)) + (\sum_{i \in \mathcal{I}} u(i) - \sum_{i \in \mathcal{I}} t(i)).
\end{equation}
If $P$ and $Q$ are probability distributions parameterized by $\boldsymbol p$ and $\boldsymbol q$ respectively, then $D(\boldsymbol p|| \boldsymbol q) = I(P|| Q)$ is the Kullback-Leibler divergence between $P$ and $Q$.


\begin{lemma}\label{lemmaGamma}
For the sequence $\{\boldsymbol{\delta}_{\gamma}^{(d)}\}$ of vectors obtained with IPF($\gamma$),  \begin{equation}\label{Dproj1n}
D(\boldsymbol \delta^{(d+1)}|| \boldsymbol \delta^{(d)}) = \underset{ \boldsymbol \zeta: \, P_{\boldsymbol \zeta} \in \mathcal{L}_{\boldsymbol \delta} (A_j, \boldsymbol q , \gamma)}
{\mbox{min }} D(\boldsymbol \zeta|| \boldsymbol \delta^{(d)}).
\end{equation} 
\end{lemma}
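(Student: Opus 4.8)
\noindent\emph{Proof proposal.}
The plan is to verify directly that the vector $\boldsymbol\delta_\gamma^{(d+1)}$ produced by (\ref{RipfGamma}) is exactly the minimizer of $D(\,\cdot\,\|\,\boldsymbol\delta_\gamma^{(d)})$ over the set $\{\boldsymbol\zeta>\boldsymbol 0:\,A_j\boldsymbol\zeta=\gamma A_j\boldsymbol q\}$; note that this is a \emph{single} linear constraint, since $A_j$ is one row of $\mathbf A$. First I would dispose of two routine points. By induction on $d$, every $\delta_\gamma^{(d)}(i)$ is strictly positive --- this holds for $\boldsymbol\delta_\gamma^{(0)}=\boldsymbol 1$, and (\ref{RipfGamma}) multiplies each coordinate by a strictly positive factor, since $A_j\boldsymbol q>0$ and $A_j\boldsymbol\delta_\gamma^{(d)}>0$ (the latter because $S_j\neq\emptyset$ and the current iterate is positive). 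Hence the update is well defined and $\boldsymbol\delta_\gamma^{(d+1)}>\boldsymbol 0$. Moreover $\boldsymbol\delta_\gamma^{(d+1)}$ is feasible: as the entries of $A_j$ are $0$ or $1$, the coordinates outside $S_j$ are left unchanged and summing (\ref{RipfGamma}) over $i\in S_j$ gives
\[
A_j\boldsymbol\delta_\gamma^{(d+1)}=\Bigl(\gamma\,\frac{A_j\boldsymbol q}{A_j\boldsymbol\delta_\gamma^{(d)}}\Bigr)\,A_j\boldsymbol\delta_\gamma^{(d)}=\gamma A_j\boldsymbol q .
\]

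For optimality I would use that $\boldsymbol\zeta\mapsto D(\boldsymbol\zeta\|\boldsymbol\delta_\gamma^{(d)})$ is strictly convex on the positive orthant (its Hessian is $\mathrm{diag}(1/\zeta(i))$, inherited from the strict convexity of $F$), so a feasible point in the open positive orthant satisfying the first-order Lagrange condition for the linear constraint is the unique global minimizer. Forming the Lagrangian
\[
\sum_{i\in\mathcal I}\Bigl(\zeta(i)\log\frac{\zeta(i)}{\delta_\gamma^{(d)}(i)}-\zeta(i)+\delta_\gamma^{(d)}(i)\Bigr)-\mu\Bigl(\sum_{i\in\mathcal I}a_{ji}\zeta(i)-\gamma A_j\boldsymbol q\Bigr)
\]
and setting the derivative in $\zeta(i)$ to zero yields $\log\bigl(\zeta(i)/\delta_\gamma^{(d)}(i)\bigr)=\mu\,a_{ji}$, i.e.\ $\zeta(i)=\delta_\gamma^{(d)}(i)\,(e^{\mu})^{a_{ji}}$; imposing the constraint forces $e^{\mu}=\gamma A_j\boldsymbol q/A_j\boldsymbol\delta_\gamma^{(d)}$, which is precisely the factor in (\ref{RipfGamma}). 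Thus $\boldsymbol\delta_\gamma^{(d+1)}$ is the Bregman projection of $\boldsymbol\delta_\gamma^{(d)}$ onto $\mathcal{L}_{\boldsymbol\delta}(A_j,\boldsymbol q,\gamma)$, establishing (\ref{Dproj1n}).

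An equivalent, perhaps cleaner, route uses the generalized Pythagorean relation for the Bregman divergence: since $\nabla F(\boldsymbol x)$ has $i$-th coordinate $\log x(i)+1$, the difference $\nabla F(\boldsymbol\delta_\gamma^{(d+1)})-\nabla F(\boldsymbol\delta_\gamma^{(d)})$ has $i$-th coordinate $a_{ji}\log\bigl(\gamma A_j\boldsymbol q/A_j\boldsymbol\delta_\gamma^{(d)}\bigr)$, hence is a scalar multiple of the row $A_j$ and therefore orthogonal to the direction space $\{\boldsymbol v:A_j\boldsymbol v=0\}$ of $\mathcal{L}_{\boldsymbol\delta}(A_j,\boldsymbol q,\gamma)$. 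Combined with the feasibility of $\boldsymbol\delta_\gamma^{(d+1)}$, this gives $D(\boldsymbol\zeta\|\boldsymbol\delta_\gamma^{(d)})=D(\boldsymbol\zeta\|\boldsymbol\delta_\gamma^{(d+1)})+D(\boldsymbol\delta_\gamma^{(d+1)}\|\boldsymbol\delta_\gamma^{(d)})$ for every $\boldsymbol\zeta$ with $P_{\boldsymbol\zeta}\in\mathcal{L}_{\boldsymbol\delta}(A_j,\boldsymbol q,\gamma)$, and $D\ge 0$ yields (\ref{Dproj1n}).

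I do not anticipate a serious obstacle. The two points needing care are: keeping the iterates strictly positive so that every logarithm and the ratio in (\ref{RipfGamma}) is defined (guaranteed by $S_j\neq\emptyset$ and $A_j\boldsymbol q>0$), and ensuring the Lagrange stationary point is a global rather than merely local minimum (guaranteed by strict convexity of $F$). It is the $0$--$1$ structure of the rows of $\mathbf A$ that makes the closed-form update (\ref{RipfGamma}) coincide exactly with the Bregman projection.
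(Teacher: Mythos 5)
Your proof is correct and follows essentially the same route as the paper's: strict convexity of $D(\cdot\,\|\,\boldsymbol\delta_\gamma^{(d)})$ plus the first-order Lagrange conditions for the single linear constraint $A_j\boldsymbol\zeta=\gamma A_j\boldsymbol q$, with the closed-form multiplier $e^{\mu}=\gamma A_j\boldsymbol q/A_j\boldsymbol\delta_\gamma^{(d)}$ identifying the update (\ref{RipfGamma}) as the minimizer. Your added checks (positivity of the iterates, feasibility via the $0$--$1$ structure of $A_j$) and the alternative Pythagorean-identity argument are sound but not needed beyond what the paper records.
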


\begin{proof}

The function $D(\boldsymbol \zeta|| \boldsymbol{\delta}_{\gamma}^{(d)})$ is convex with respect to $\boldsymbol \zeta$ and, therefore, its minimum, say $\boldsymbol \zeta^*$, on the set $\{\boldsymbol \zeta: \, P_{\boldsymbol \zeta} \in \mathcal{L}_{\boldsymbol \delta} (A_j, \boldsymbol q , \gamma)\}$ exists and is unique.  

Setting the derivatives of the Lagrangian 
\begin{eqnarray*}
{L} &=& D(\boldsymbol \zeta|| \boldsymbol{\delta}_{\gamma}^{(d)}) - \alpha({A}_{j} \boldsymbol \zeta - \gamma A_{j}\boldsymbol{q}) \\
&=& \sum_{i \in \mathcal{I}} \zeta(i) \mbox{log }(\zeta(i)/{\delta}_{\gamma}^{(d)}(i)) + (\sum_{i \in \mathcal{I}} {\delta}_{\gamma}^{(d)}(i) - \sum_{i \in \mathcal{I}} \zeta(i))- \alpha({A}_{j} \boldsymbol \zeta - \gamma A_{j}\boldsymbol{q})
\end{eqnarray*}
equal to zero, one obtains that $\boldsymbol \zeta^*$ is the unique solution to the equations
\begin{eqnarray}\label{EqDelta}
\mbox{log }(\zeta^*(i)/{\delta}_{\gamma}^{(d)}(i)) &=& \alpha a_{ji}, \,\, i \in \mathcal{I},\\ 
{A}_{j} \boldsymbol \zeta^* &=& \gamma A_{j}\boldsymbol{q}. \nonumber
\end{eqnarray}
Since $\boldsymbol{\delta}_{\gamma}^{(d+1)}$, given in (\ref{RipfGamma}), satisfies (\ref{EqDelta}) with $\alpha = \mbox{log }\gamma \frac{A_{j}\boldsymbol{q}}{A_{j}\boldsymbol{\delta}_{\gamma}^{(d)}}$, it is equal to $\boldsymbol \zeta^*$.
\end{proof}

Lemma \ref{lemmaGamma}  implies that, for each $d \geq 0$, the distribution parameterized by $\boldsymbol{\delta}_{\gamma}^{(d+1)}$ is the D-projection of the distribution parameterized by  $\boldsymbol{\delta}_{\gamma}^{(d)}$ on the set $\mathcal{L}_{\boldsymbol \delta} (A_j, \boldsymbol q , \gamma)$, that is, $\{\boldsymbol{\delta}_{\gamma}^{(d)}\}$ is a relaxation sequence with respect to Bregman divergence. Note that the distribution parameterized by $\boldsymbol \delta_{\gamma}^{(d+1)}$ does not necessarily have the same total as the distribution parameterized by $\boldsymbol \delta_{\gamma}^{(d)}$ did and, even if $\boldsymbol \delta \equiv \boldsymbol p$, is not necessarily a probability distribution.

It is proved in the following theorem that IPF($\gamma$) converges to the parameter $\boldsymbol{\delta}_{\gamma}^{*}$ of the unique common point of the linear family $\mathcal{L}_{\boldsymbol \delta}(\mathbf{A}, \boldsymbol q, \gamma)$ and the multiplicative family $\mathcal{M}(\mathbf{A})$.

\begin{theorem}\label{ThGamma}
The sequence  $\boldsymbol{\delta}_{\gamma}^{(d)}$, obtained from IPF($\gamma$), converges, as $d \to \infty$, and 
the limit  $\boldsymbol{\delta}_{\gamma}^{*}$ satisfies 
\begin{tabbing}
(i) \= \hspace{2mm} $\mathbf{A}\boldsymbol{\delta}_{\gamma}^* = \gamma \mathbf{A} \boldsymbol q$, \\
(ii)\> \hspace{3mm} $\mathbf{D} \mbox{log } \boldsymbol{\delta}_{\gamma}^* = \boldsymbol 0$.\\
\end{tabbing}
\end{theorem}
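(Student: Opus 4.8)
The plan is to exhibit IPF($\gamma$) as an instance of Bregman's method of cyclic projections onto convex sets, and then to read off (i) and (ii) from two invariants that the iteration maintains. By Lemma~\ref{lemmaGamma}, the passage from $\boldsymbol{\delta}_{\gamma}^{(d)}$ to $\boldsymbol{\delta}_{\gamma}^{(d+1)}$ is precisely the $D$-projection of $P_{\boldsymbol{\delta}_{\gamma}^{(d)}}$ onto $\mathcal{L}_{\boldsymbol \delta}(A_j,\boldsymbol q,\gamma)$, where $d+1\equiv j \bmod J$; since {\tt Step~1} cycles $j$ through $1,\dots,J$, the sequence $\{\boldsymbol{\delta}_{\gamma}^{(d)}\}$ is exactly the cyclic $D$-relaxation sequence for the $J$ convex sets $\mathcal{L}_{\boldsymbol \delta}(A_1,\boldsymbol q,\gamma),\dots,\mathcal{L}_{\boldsymbol \delta}(A_J,\boldsymbol q,\gamma)$, whose intersection is $\mathcal{L}_{\boldsymbol \delta}(\mathbf{A},\boldsymbol q,\gamma)$ and is non-empty by the standing hypothesis on $\gamma$. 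The generating function $F(\boldsymbol x)=\sum_{i\in\mathcal I}x(i)\log x(i)$ is strictly convex on $\mathbb{R}^{|\mathcal I|}_{>0}$ with gradient diverging at the boundary of the orthant, so it is a Bregman function of the kind required by \cite{Bregman}. First I would invoke the convergence theorem of \cite{Bregman} --- in the form used by \cite{Csiszar} for cyclic $I$-projections --- to conclude that $\boldsymbol{\delta}_{\gamma}^{(d)}$ converges, as $d\to\infty$, to a strictly positive vector $\boldsymbol{\delta}_{\gamma}^{*}$ lying in the intersection $\mathcal{L}_{\boldsymbol \delta}(\mathbf{A},\boldsymbol q,\gamma)$.

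Once convergence is in hand, (i) is immediate: $\boldsymbol{\delta}_{\gamma}^{*}\in\bigcap_{j=1}^{J}\mathcal{L}_{\boldsymbol \delta}(A_j,\boldsymbol q,\gamma)=\mathcal{L}_{\boldsymbol \delta}(\mathbf{A},\boldsymbol q,\gamma)$ says exactly $\mathbf{A}\boldsymbol{\delta}_{\gamma}^{*}=\gamma\mathbf{A}\boldsymbol q$. If one prefers to avoid the ``limit lies in the intersection'' clause of Bregman's theorem, mere convergence of the full sequence suffices: for each fixed $j$ the update formula~(\ref{RipfGamma}) forces $A_j\boldsymbol{\delta}_{\gamma}^{(m)}=\gamma A_j\boldsymbol q$ at every index $m\equiv j \bmod J$, and letting $m\to\infty$ along that residue class yields $A_j\boldsymbol{\delta}_{\gamma}^{*}=\gamma A_j\boldsymbol q$ for all $j$.

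For (ii) I would track the multiplicative structure. Writing $c_d=\gamma\,A_j\boldsymbol q/A_j\boldsymbol{\delta}_{\gamma}^{(d)}>0$, equation~(\ref{RipfGamma}) says that $\log\boldsymbol{\delta}_{\gamma}^{(d+1)}=\log\boldsymbol{\delta}_{\gamma}^{(d)}+(\log c_d)\,A_j'$, i.e.\ every update adds a scalar multiple of a row of $\mathbf{A}$ to $\log\boldsymbol{\delta}_{\gamma}^{(d)}$. Since $\log\boldsymbol{\delta}_{\gamma}^{(0)}=\log\boldsymbol 1=\boldsymbol 0\in R(\mathbf{A})$, induction gives $\log\boldsymbol{\delta}_{\gamma}^{(d)}\in R(\mathbf{A})$ for all $d$ --- equivalently, every generalized odds ratio $\mathcal{OR}_{\boldsymbol d}$ attached to a row of $\mathbf{D}$ stays equal to $1$, because such a row is orthogonal to every row of $\mathbf{A}$. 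As $R(\mathbf{A})$ is a closed linear subspace of $\mathbb{R}^{|\mathcal I|}$ and $\boldsymbol{\delta}_{\gamma}^{*}>0$, so that $\log$ is continuous at $\boldsymbol{\delta}_{\gamma}^{*}$, the limit satisfies $\log\boldsymbol{\delta}_{\gamma}^{*}\in R(\mathbf{A})$. Finally, since the rows of $\mathbf{D}$ form a basis of $Ker(\mathbf{A})=R(\mathbf{A})^{\perp}$, the relation $\log\boldsymbol{\delta}_{\gamma}^{*}\in R(\mathbf{A})$ is equivalent to $\mathbf{D}\log\boldsymbol{\delta}_{\gamma}^{*}=\boldsymbol 0$, which is (ii); combined with (i), this also identifies $\boldsymbol{\delta}_{\gamma}^{*}$ as a common point of $\mathcal{L}_{\boldsymbol \delta}(\mathbf{A},\boldsymbol q,\gamma)$ and $\mathcal{M}(\mathbf{A})$.

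The substantive step --- and the main obstacle --- is the convergence claim, i.e.\ checking that IPF($\gamma$) genuinely meets the hypotheses of Bregman's theorem. Two points require care. First, each $\mathcal{L}_{\boldsymbol \delta}(A_j,\boldsymbol q,\gamma)$ is an affine slice of the \emph{open} orthant, so formally one projects onto its closure and must know the projection returns to the interior; Lemma~\ref{lemmaGamma} already settles this, its Lagrangian carrying only the linear constraint $A_j\boldsymbol\zeta=\gamma A_j\boldsymbol q$ while the minimizer $\boldsymbol{\delta}_{\gamma}^{(d+1)}$ nonetheless comes out strictly positive. Second, and more delicate, one must ensure that the limit $\boldsymbol{\delta}_{\gamma}^{*}$ is strictly positive rather than on the boundary of the orthant: otherwise $\log\boldsymbol{\delta}_{\gamma}^{(d)}$ could escape to infinity within $R(\mathbf{A})$ and the argument for (ii) would break down. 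This is exactly where the assumption $\mathcal{L}_{\boldsymbol \delta}(\mathbf{A},\boldsymbol q,\gamma)\neq\emptyset$ enters --- it supplies a strictly positive point of the intersection toward which the Bregman relaxation sequence is forced to converge.
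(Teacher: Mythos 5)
Your proposal is correct and follows essentially the same route as the paper: Lemma~\ref{lemmaGamma} identifies the iteration as a cyclic Bregman relaxation sequence, Bregman's convergence theorem gives the limit in $\mathcal{L}_{\boldsymbol \delta}(\mathbf{A},\boldsymbol q,\gamma)$ for (i), and the same induction on $\log\boldsymbol{\delta}_{\gamma}^{(d)}$ (your $R(\mathbf{A})$ membership is exactly the paper's $\mathbf{D}\log\boldsymbol{\delta}_{\gamma}^{(d)}=\boldsymbol 0$, since the rows of $\mathbf{D}$ span $Ker(\mathbf{A})=R(\mathbf{A})^{\perp}$) plus continuity gives (ii). Your added remarks on strict positivity of the limit address a point the paper passes over silently, but they do not change the argument.
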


\begin{proof}

(i) Since $\{\boldsymbol{\delta}_{\gamma}^{(d)} \}$ is a relaxation sequence with respect to the function $D(\boldsymbol t|| \boldsymbol u)$, it converges to a $\boldsymbol{\delta}_{\gamma}^{*}$  \citep[Theorem 1]{Bregman} which belongs to the set 
$$
\cap_{j=1}^{J} \{ \boldsymbol{\delta}_{\gamma}:  A_j\boldsymbol{\delta}_{\gamma} = \gamma A_j \boldsymbol{q} \},$$
and thus $$\mathbf{A}\boldsymbol{\delta}_{\gamma}^* = \gamma\boldsymbol A \boldsymbol{q}.$$ 

\vspace{1mm}

(ii) IPF($\gamma$) multiplies the current value in each cell that belongs to a subset $S_j$ by the ratio of the desired subset sum to the actual one. This transformation leaves the values of the generalized odds ratios unchanged. The formal argument proceeds by induction. Since ${\delta}_{\gamma}^{(0)}(i) = 1$, for all $i \in \mathcal{I}$, $\mathbf{D} \mbox{log } \boldsymbol {\delta}_{\gamma}^{(0)} = \boldsymbol 0$, and the statement holds for $d = 0$. Assume that $\mathbf{D} \mbox{log } \boldsymbol{\delta}_{\gamma}^{(d)} = \boldsymbol 0$ for a positive integer $d$. Set $C_j = \frac{\gamma {A}_{j} \boldsymbol{\delta}}{{A}_{j} \boldsymbol{\delta}_{\gamma}^{(d)}}$. Then, 
\begin{eqnarray*}
\mathbf{D} \mbox{log } \boldsymbol{\delta}_{\gamma}^{(d+1)} &=& \mathbf{D} \mbox{log }\left[\begin{array}{c}  
{\delta}_{\gamma}^{(d)}(1)\cdot C_j^{a_{j1}}\\
{\delta}_{\gamma}^{(d)}(2)\cdot C_j^{a_{j2}}\\
\vdots \\
{\delta}_{\gamma}^{(d)}(|\mathcal{I}|)\cdot C_j^{a_{j|\mathcal{I}|}}
\end{array}\right] = \mathbf{D} \left[\begin{array}{c}  
\mbox{log }{\delta}_{\gamma}^{(d)}(1)+ {a_{j1}}\mbox{log } C_j\\
\mbox{log }{\delta}_{\gamma}^{(d)}(2)+ {a_{j2}}\mbox{log } C_j\\
\vdots \\
\mbox{log }{\delta}_{\gamma}^{(d)}(|\mathcal{I}|)+{a_{j|\mathcal{I}|}}\mbox{log } C_j
\end{array}\right] \\
& & \\
&=&  \mathbf{D} \mbox{log } \boldsymbol{\delta}_{\gamma}^{(d)} + \mbox{log } C_j \mathbf{D}A'_{j} = \boldsymbol 0,
\end{eqnarray*}
as $\mathbf{D}$ is a kernel basis matrix and thus $\mathbf{D}A'_{j}= \boldsymbol 0$.
Therefore, $\mathbf{D} \mbox{log } \boldsymbol{\delta}_{\gamma}^{(d)} = \boldsymbol 0$ for all $d = 0, 1, 2 \dots$.
Finally, by continuity of matrix multiplication and logarithm, $\mathbf{D} \mbox{log } \boldsymbol{\delta}_{\gamma}^{*} = \boldsymbol 0$. 
\end{proof}

The next statement specifies when IPF($\gamma$) converges to the maximum likelihood estimates under a relational model.

\begin{corollary}\label{corAllIPF}
The following statements hold:
\begin{enumerate}
\item \label{Cor1} Let $\mathbf{Y}$ be a random variable that has a multinomial distribution with parameters $N$ and $\boldsymbol p$. If the overall effect is present, then the sequence obtained from IPF($1$) converges to the maximum likelihood estimate $\hat{\boldsymbol p}$ of $\boldsymbol p$ under the model. If the overall effect is not present, then, if the adjustment factor $\gamma^*$ is known, the sequence obtained from IPF($\gamma^*$) converges to the maximum likelihood estimate $\hat{\boldsymbol p}$ of $\boldsymbol p$ under the model.
\item \label{Cor2} Let $\mathbf{Y}$ be a random variable that has a Poisson distribution with parameter $\boldsymbol \lambda$. Then, whether or not the overall effect is present, the sequence obtained from IPF($1$) converges to the maximum likelihood estimate $\hat{\boldsymbol \lambda}$ of $\boldsymbol \lambda$ under the model. \qed
\end{enumerate} 
\end{corollary}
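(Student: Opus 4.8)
The plan is to read off the limit of the IPF($\gamma$) iterations directly from Theorem~\ref{ThGamma} and then match it against the descriptions of the MLE recalled in~(\ref{linF1}) and~(\ref{linF1c}), which are due to \cite{KRD11}. In every case the argument consists of the same three moves: (a) choose the value of $\gamma$ for which $\mathcal{L}_{\boldsymbol\delta}(\mathbf A,\boldsymbol q,\gamma)\cap\mathcal{M}(\mathbf A)$ is exactly the singleton carrying the MLE; (b) observe that the assumed existence of the MLE makes $\mathcal{L}_{\boldsymbol\delta}(\mathbf A,\boldsymbol q,\gamma)$ non-empty, so that Theorem~\ref{ThGamma} applies; and (c) conclude, from parts (i) and (ii) of Theorem~\ref{ThGamma}, that the limit $\boldsymbol\delta_\gamma^*$ lies in that singleton and hence is the MLE.

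For statement~\ref{Cor2}, and for statement~\ref{Cor1} when the overall effect is present, the multiplicative family is a regular exponential family, so by~(\ref{linF1}) the MLE is the unique common point of $\mathcal{M}(\mathbf A)$ and $\mathcal{L}_{\boldsymbol\delta}(\mathbf A,\boldsymbol q,1)$, with $\boldsymbol q$ as in~(\ref{b}). Since the MLE exists, this intersection, and in particular $\mathcal{L}_{\boldsymbol\delta}(\mathbf A,\boldsymbol q,1)$, is non-empty, so Theorem~\ref{ThGamma} applies to IPF($1$) and yields a limit $\boldsymbol\delta_1^*$ with $\mathbf A\boldsymbol\delta_1^*=\mathbf A\boldsymbol q$ and $\mathbf D\log\boldsymbol\delta_1^*=\boldsymbol 0$, i.e.\ $P_{\boldsymbol\delta_1^*}\in\mathcal{L}_{\boldsymbol\delta}(\mathbf A,\boldsymbol q,1)\cap\mathcal{M}(\mathbf A)$. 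By uniqueness of that common point, $\boldsymbol\delta_1^*=\hat{\boldsymbol\lambda}$ in the Poisson case. In the multinomial case one adds one remark: since $\boldsymbol 1\in R(\mathbf A)$ there are coefficients $c_j$ with $\sum_j c_j A_j=\boldsymbol 1$, hence $\boldsymbol 1\boldsymbol\delta_1^*=\sum_j c_j A_j\boldsymbol\delta_1^*=\sum_j c_j A_j\boldsymbol q=\boldsymbol 1\boldsymbol q=1$, so $\boldsymbol\delta_1^*$ parameterizes a probability distribution and therefore equals $\hat{\boldsymbol p}$.

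For statement~\ref{Cor1} without the overall effect, the multiplicative family is curved, and by~(\ref{linF1c}) the MLE $\hat{\boldsymbol p}$ is the unique common point of $\mathcal{M}(\mathbf A)$ and $\mathcal{L}_{\boldsymbol\delta}(\mathbf A,\boldsymbol q,\gamma^*)$, where the adjustment factor $\gamma^*$ is precisely the unique $\gamma>0$ for which that common point sums to one. Existence of $\hat{\boldsymbol p}$ makes $\mathcal{L}_{\boldsymbol\delta}(\mathbf A,\boldsymbol q,\gamma^*)$ non-empty, so running IPF($\gamma^*$) and applying Theorem~\ref{ThGamma} gives a limit $\boldsymbol\delta_{\gamma^*}^*$ with $\mathbf A\boldsymbol\delta_{\gamma^*}^*=\gamma^*\mathbf A\boldsymbol q$ and $\mathbf D\log\boldsymbol\delta_{\gamma^*}^*=\boldsymbol 0$, which places it in $\mathcal{L}_{\boldsymbol\delta}(\mathbf A,\boldsymbol q,\gamma^*)\cap\mathcal{M}(\mathbf A)$; by uniqueness, $\boldsymbol\delta_{\gamma^*}^*=\hat{\boldsymbol p}$.

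The only delicate point is this last case: one must use that the adjustment factor is exactly the $\gamma$ appearing in~(\ref{linF1c}), so that the point delivered by Theorem~\ref{ThGamma} is forced to be the probability-normalized MLE rather than some other, unnormalized, element of $\mathcal{M}(\mathbf A)\cap\mathcal{L}_{\boldsymbol\delta}(\mathbf A,\boldsymbol q,\gamma)$. Everything else — tracking which $\gamma$ to feed to IPF, and checking non-emptiness of the linear families from the standing assumption that the MLE exists — is routine.
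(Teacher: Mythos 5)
Your proposal is correct and follows exactly the route the paper intends: the corollary is stated with no separate proof because it is meant to follow immediately from Theorem \ref{ThGamma} combined with the characterizations (\ref{linF1}) and (\ref{linF1c}) of the MLE from \cite{KRD11}, which is precisely your argument. Your extra check that $\boldsymbol 1\boldsymbol\delta_1^*=1$ when $\boldsymbol 1\in R(\mathbf{A})$ is a small but welcome addition that the paper leaves implicit.
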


The conventional IPF procedure computes maximum likelihood estimates for the cell parameters, and it is often pointed out that this algorithm is not suitable for estimating the model parameters 
\citep[cf.][p.14]{FeinbergRinaldo2012supplementary}. In fact, this is not the case, and it will be proven next that  the sequence $\boldsymbol \theta_{\gamma}^{(d)}$, obtained from IPF($\gamma$), converges to estimates of the model parameters. 

\begin{theorem} \label{ConvToParam}
For all $j = 1, \dots, J$, the sequence $\theta_{\gamma}^{(d)}(j)$, obtained in (\ref{IPFpapamCompute}), converges, as $d \to \infty$, to a  $\theta_{\gamma}^{*}(j)$ and 
$$ {\delta}_{\gamma}^{*}(i) = \prod_{j = 1}^J (\theta_{\gamma}^{*}(j))^{a_{ji}}, \,\, \mbox{ for all }  i \in \mathcal{I}. $$
\end{theorem}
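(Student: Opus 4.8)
The plan is to show that the multiplicative representation $\delta_\gamma^{(d)}(i)=\prod_{j}(\theta_\gamma^{(d)}(j))^{a_{ji}}$ is not merely a property of the limit but an \emph{invariant} of every iterate, and then to transfer the convergence of $\{\boldsymbol\delta_\gamma^{(d)}\}$ already established in Theorem \ref{ThGamma} to $\{\boldsymbol\theta_\gamma^{(d)}\}$ through the (injective) linear map $\boldsymbol\beta\mapsto\mathbf A'\boldsymbol\beta$. First I would record the effect of one iteration on $\boldsymbol\theta_\gamma$. Since $\mathbf A$ is a $0$--$1$ matrix, for any $i^*$ with $a_{ji^*}>0$ one has $a_{ji^*}=1$, so by (\ref{RipfGamma})
\[
\frac{\delta_\gamma^{(d+1)}(i^*)}{\delta_\gamma^{(d)}(i^*)}=\left(\gamma\,\frac{A_j\boldsymbol q}{A_j\boldsymbol\delta_\gamma^{(d)}}\right)^{a_{ji^*}}=\gamma\,\frac{A_j\boldsymbol q}{A_j\boldsymbol\delta_\gamma^{(d)}}=:C_j^{(d)}.
\]
In particular this ratio is the same for every admissible choice of $i^*\in S_j$, so (\ref{IPFpapamCompute}) is well defined, and at step $d+1$ the only coordinate of $\boldsymbol\theta_\gamma$ that changes is the $j$-th, which gets multiplied by $C_j^{(d)}$.

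Next I would prove the invariant by induction on $d$:
\[
\delta_\gamma^{(d)}(i)=\prod_{j=1}^J\bigl(\theta_\gamma^{(d)}(j)\bigr)^{a_{ji}},\qquad i\in\mathcal I.
\]
For $d=0$ both sides equal $1$ by the initialization of IPF($\gamma$). For the inductive step, let $j$ be the index used at iteration $d+1$; then $\theta_\gamma^{(d+1)}(k)=\theta_\gamma^{(d)}(k)$ for $k\ne j$ and $\theta_\gamma^{(d+1)}(j)=\theta_\gamma^{(d)}(j)\,C_j^{(d)}$, whence
\[
\prod_{k=1}^J\bigl(\theta_\gamma^{(d+1)}(k)\bigr)^{a_{ki}}=\Bigl(\prod_{k=1}^J\bigl(\theta_\gamma^{(d)}(k)\bigr)^{a_{ki}}\Bigr)\bigl(C_j^{(d)}\bigr)^{a_{ji}}=\delta_\gamma^{(d)}(i)\bigl(C_j^{(d)}\bigr)^{a_{ji}}=\delta_\gamma^{(d+1)}(i),
\]
the last equality being (\ref{RipfGamma}); this is exactly the claim for $d+1$. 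Equivalently, $\mbox{log }\boldsymbol\delta_\gamma^{(d)}=\mathbf A'\,\mbox{log }\boldsymbol\theta_\gamma^{(d)}$ for every $d$.

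For convergence I would invoke that the rows of $\mathbf A$ are linearly independent — the standing identifiability condition for the model parameters — so that $\mathbf A\mathbf A'$ is nonsingular and $(\mathbf A\mathbf A')^{-1}\mathbf A$ is a left inverse of $\mathbf A'$. Applying it to the invariant gives $\mbox{log }\boldsymbol\theta_\gamma^{(d)}=(\mathbf A\mathbf A')^{-1}\mathbf A\,\mbox{log }\boldsymbol\delta_\gamma^{(d)}$. By Theorem \ref{ThGamma} the sequence $\boldsymbol\delta_\gamma^{(d)}$ converges to $\boldsymbol\delta_\gamma^{*}$ with $\boldsymbol\delta_\gamma^{*}>\boldsymbol 0$; hence, by continuity of the logarithm and of matrix multiplication, $\mbox{log }\boldsymbol\theta_\gamma^{(d)}\to(\mathbf A\mathbf A')^{-1}\mathbf A\,\mbox{log }\boldsymbol\delta_\gamma^{*}$, and exponentiating coordinatewise, $\theta_\gamma^{(d)}(j)\to\theta_\gamma^{*}(j):=\exp\{[(\mathbf A\mathbf A')^{-1}\mathbf A\,\mbox{log }\boldsymbol\delta_\gamma^{*}]_j\}$. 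Letting $d\to\infty$ in the invariant (the exponents $a_{ji}$ are fixed and the bases are positive and convergent) then yields $\delta_\gamma^{*}(i)=\prod_{j=1}^J(\theta_\gamma^{*}(j))^{a_{ji}}$, which is the assertion.

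The step I expect to be the real obstacle is precisely this last one. Part (i) of Theorem \ref{ThGamma} only guarantees that the subset-sum constraints are met asymptotically, i.e.\ $C_j^{(d)}\to1$; by itself this does \emph{not} force the successive multiplicative updates of $\theta_\gamma^{(d)}(j)$ to converge, since a product of factors each tending to $1$ may diverge, and the convergence result of \cite{Bregman} need not supply a rate. What rescues the argument is that the whole $\boldsymbol\theta$-sequence is rigidly determined by the $\boldsymbol\delta$-sequence through the injective map $\mathbf A'$, so that convergence of $\boldsymbol\delta_\gamma^{(d)}$ pulls back to convergence of $\boldsymbol\theta_\gamma^{(d)}$; this is exactly where linear independence of the rows of $\mathbf A$ is indispensable. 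A secondary point to check carefully is the well-definedness of (\ref{IPFpapamCompute}) noted above, which again uses that $\mathbf A$ has $0$--$1$ entries.
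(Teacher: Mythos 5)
Your proof is correct, and the first half (the inductive invariant $\delta_\gamma^{(d)}(i)=\prod_j(\theta_\gamma^{(d)}(j))^{a_{ji}}$, together with the observation that the $0$--$1$ structure of $\mathbf A$ makes (\ref{IPFpapamCompute}) independent of the choice of $i^*$) coincides with the paper's argument. Where you diverge is the convergence of $\boldsymbol\theta_\gamma^{(d)}$. You invert the invariant through a left inverse of $\mathbf A'$, writing $\mbox{log }\boldsymbol\theta_\gamma^{(d)}=(\mathbf A\mathbf A')^{-1}\mathbf A\,\mbox{log }\boldsymbol\delta_\gamma^{(d)}$ and appealing to continuity; this is clean, but it genuinely requires the rows of $\mathbf A$ to be linearly independent, an identifiability assumption that is standard for relational models but is not restated in this paper and is not used in its proof. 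The paper instead tracks the algorithm directly: since $\theta_\gamma(j)$ is only touched at steps $j+kJ$, the ratio of two consecutive updated values telescopes, via repeated application of (\ref{IPFpapamCompute}), into the exact identity $\theta_{\gamma}^{(j+(k+1)J)}(j)/\theta_{\gamma}^{(j+kJ)}(j)=\delta_{\gamma}^{(j+(k+1)J)}(i^*)/\delta_{\gamma}^{(j+kJ)}(i^*)$, and multiplying over $k$ gives $\theta_{\gamma}^{(j+KJ)}(j)=\theta_{\gamma}^{(j)}(j)\,\delta_{\gamma}^{(j+KJ)}(i^*)/\delta_{\gamma}^{(j)}(i^*)$, which converges because $\delta_{\gamma}^{(d)}(i^*)$ does. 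So the obstacle you flag --- that factors tending to $1$ need not have a convergent product and that Bregman's theorem gives no rate --- is resolved in the paper not by injectivity of $\mathbf A'$ but by this exact telescoping, which pins each $\theta_\gamma^{(d)}(j)$ to a single cell value up to a fixed constant. Your route buys a one-line convergence argument at the cost of an extra (though natural) rank hypothesis; the paper's route is rank-free and also exhibits the limit $\theta_\gamma^*(j)$ explicitly in terms of $\delta_\gamma^*(i^*)$. Both are valid, and both finish identically by passing to the limit in the invariant.
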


\begin{proof}

By the choice of the initial values, ${\delta}_{\gamma}^{(0)}(i) = \prod_{j = 1}^J (\theta_{\gamma}^{(0)}(j))^{a_{ji}}$, for all $i \in \mathcal{I}$. The further argument is by induction.

Let $d \geq 0$ and  $d+1 \equiv j \mbox{ mod } J$. By the induction hypothesis, 
$${\delta}_{\gamma}^{(d)}(i) = \prod_{j = 1}^J (\theta_{\gamma}^{(d)}(j))^{a_{ji}}, \,\, \mbox{ for all }  i \in \mathcal{I}.$$

During the $(d+1)$-st iteration, only the parameters of the cells that belong to the subset $S_j$, and the model parameter corresponding to $S_j$ are updated; ${\delta}_{\gamma}^{(d+1)}(i) = {\delta}_{\gamma}^{(d)}(i)$  for  $ i \notin S_j$,  and  ${\theta}_{\gamma}^{(d+1)}(l) = {\theta}_{\gamma}^{(d)}(l)$ for $l \ne j$. Therefore, for $ i \notin S_j $
\begin{align*}
{\delta}_{\gamma}^{(d+1)}(i) &= \prod_{l = 1}^J (\theta_{\gamma}^{(d+1)}(l))^{a_{li}}, 
\end{align*}
because for those parameters that changed during the last step of iteration,  the exponent $a_{li}$ is zero. Further, for $i \in S_j$
\begin{align*}
{\delta}^{(d+1)}(i) &= {\delta}_{\gamma}^{(d)}(i) \frac {\theta_{\gamma}^{(d+1)}(j)}{\theta_{\gamma}^{(d)}(j)} =
\prod_{l = 1}^J (\theta_{\gamma}^{(d)}(l))^{a_{li}}\frac {\theta_{\gamma}^{(d+1)}(j)}{\theta_{\gamma}^{(d)}(j)} = 
 \prod_{l = 1}^J (\theta_{\gamma}^{(d+1)}(l))^{a_{li}}.\\
\end{align*}
By the principle of induction,  for all $d \geq 0$,
\begin{equation}\label{ModD}
{\delta}_{\gamma}^{(d+1)}(i) = \prod_{j = 1}^J (\theta_{\gamma}^{(d+1)}(j))^{a_{ji}}, \,\, \mbox{ for all }  i \in \mathcal{I}.
\end{equation}

During the iterations, the value of the parameter associated with the subset $S_j$ is updated at a step $j+kJ$, for some non-negative integer $k$, and then remains constant till it is updated again in the step $j+(k+1)J$. The ratio of two subsequent values of this parameter equals
\begin{align*}
 \frac{\theta_{\gamma}^{(j+(k+1)J)}(j)}{\theta_{\gamma}^{(j+kJ)}(j)}=\frac{\theta_{\gamma}^{(j + (k+1)J)}(j)}{\theta_{\gamma}^{(j-1 +  (k+1)J)}(j)} \cdot  
\frac{\theta_{\gamma}^{(j-1 +  (k+1)J)}(j)}{\theta_{\gamma}^{(j-2 +  (k+1)J)}(j)} \cdots 
\frac{\theta_{\gamma}^{(j -(J-1)+(k+1)J)}(j)}{\theta_{\gamma}^{(j+ kJ)}(j)}, 
\end{align*}
which, by repeated application of (\ref{IPFpapamCompute}), may be written as
\begin{align}  \label{subsequ}
 \frac{\delta_{\gamma}^{(j+(k+1)J)}(i^*)}{\delta_{\gamma}^{(j+kJ)}(i^*)}=\frac{\delta_{\gamma}^{(j + (k+1)J)}(i^*)}{\delta_{\gamma}^{(j-1 +  (k+1)J)}(i^*)} \cdot  
\frac{\delta_{\gamma}^{(j-1 +  (k+1)J)}(i^*)}{\delta_{\gamma}^{(j-2 +  (k+1)J)}(i^*)} \cdots 
\frac{\delta_{\gamma}^{(j -(J-1)+(k+1)J)}(i^*)}{\delta_{\gamma}^{(j+ kJ)}(i^*)}, 
\end{align}
where $i^*$ is specified prior to (\ref{IPFpapamCompute}).

By Theorem  \ref{ThGamma}, as $d \to \infty$,  ${\delta}_{\gamma}^{(d)}(i)$ converges to a ${\delta}_{\gamma}^*(i)$ for every $i \in \mathcal{I}$. Therefore, the right hand side of (\ref{subsequ}) converges to $1$ and thus 
$$
 \frac{\theta_{\gamma}^{(j+(k+1)J)}(j)}{\theta_{\gamma}^{(j+kJ)}(j)}  \to 1,
$$
when  $d \to \infty$, i.e., for every fixed $j$, when $k \to \infty$.

This, by applying an elementary calculus argument, implies that for every $j = 1, \dots, J$,   ${\theta}_{\gamma}^{(d)}(j)$, obtained in (\ref{IPFpapamCompute}), converges to a ${\theta}_{\gamma}^{*}(j)$. From (\ref{ModD}), by continuity, 
$$ {\delta}_{\gamma}^{*}(i) = \prod_{j = 1}^J (\theta_{\gamma}^{*}(j))^{a_{ji}}, \,\, \mbox{ for all }  i \in \mathcal{I}. $$
\end{proof}

The G-IPF algorithm described next can be used for computing the maximum likelihood estimates under relational models both for probabilities and for intensities, with or without the overall effect. In the case when the model is for probabilities and does not include the overall effect, G-IPF complements IPF($\gamma$) with a step that computes the adjustment factor.

\vspace{1mm}


\begin{center} \textbf{G-IPF Algorithm:} \end{center}


\begin{itemize}
\item[] If $\boldsymbol \delta \equiv \boldsymbol \lambda$,  compute $\hat{\boldsymbol{\lambda}}$ using IPF(1), and finish.
\item[] If $\boldsymbol \delta \equiv \boldsymbol p$, compute $\boldsymbol{p}^*$ using IPF($1$). \\
If $\boldsymbol 1\boldsymbol{p}^* = 1$, set $\hat{\boldsymbol p} = \boldsymbol p^*$, and finish. 
Otherwise, \\
{compute} $\gamma_L = (\boldsymbol 1\mathbf{A}\boldsymbol q)^{-1}$, $\gamma_R = \mbox{min } \{1/A_1\boldsymbol q, \dots, 1/A_J\boldsymbol q\}$, and proceed as follows:
\begin{itemize}
\item[] {\tt Step 1}: {Find} $\boldsymbol \delta_{(\gamma_L +\gamma_R)/2}^*$ \hspace{2mm} using IPF($\gamma$). \\
\item[] {\tt Step 2}: While   $\boldsymbol 1\boldsymbol \delta_{(\gamma_L +\gamma_R)/2}^* \ne 1$, 
\begin{itemize}
\item[] \qquad {if }  $\boldsymbol 1\boldsymbol{\delta}_{(\gamma_{L} + \gamma_{R})/2}^* < 1$,  {set } $\gamma_{L}  = \frac{\gamma_{L} + \gamma_{R}}{2}$,  
\item[] \qquad {else } {set } $\gamma_{R}  = \frac{\gamma_{L} + \gamma_{R}}{2}$;
\item[] \qquad {go to {\tt Step 1}}.\\
\end{itemize} 
\item[] {\tt Step 3}: Set  $\hat{\boldsymbol{p}}= \boldsymbol{\delta}_{(\gamma_L +\gamma_R)/2}^*$, and finish. \qed
\end{itemize}
\end{itemize}

\vspace{3mm}

The next lemma justifies the initial choice of $\gamma_L$ and $\gamma_R$. 

\vspace{2mm}

\begin{lemma}\label{monoton}
Let $\gamma > 0$ and let $\boldsymbol{\delta}_{\gamma}$ be a solution to the system
of equations 
\begin{equation}\label{system}
\mathbf{A}\boldsymbol{\delta} = \gamma \mathbf{A} \boldsymbol q, \,\, \mathbf{D} \mbox{log } \boldsymbol{\delta} = 0. 
\end{equation}
Then, for $\gamma_L = (\boldsymbol 1 \mathbf{A} \boldsymbol q)^{-1}$ and $\gamma_R =\mbox{min } \{1/{A}_1 \boldsymbol q, \dots, 1/{A}_J \boldsymbol q\}$,
$\boldsymbol 1\boldsymbol \delta_{\gamma_L} \leq 1$ and $\boldsymbol 1\boldsymbol \delta_{\gamma_R} \geq 1$.
\end{lemma}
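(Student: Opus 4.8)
The plan is to obtain both inequalities as one-line comparisons, using only the two defining properties of $\boldsymbol{\delta}_{\gamma}$ — that it satisfies the subset-sum constraints $A_j\boldsymbol{\delta}_{\gamma}=\gamma A_j\boldsymbol q$ for every $j$ (it lies in $\mathcal{L}_{\boldsymbol \delta}(\mathbf A,\boldsymbol q,\gamma)$) and that it is positive (it parameterizes a distribution in $\mathcal{P}$). The two elementary facts about $\mathbf A$ that I would invoke are that each row is a $0$--$1$ indicator, so $A_j\boldsymbol\delta=\sum_{i\in S_j}\delta(i)$, and that the generating subsets cover $\mathcal I$, so $a(i)=\sum_{j=1}^{J}a_{ji}\ge 1$ for every $i\in\mathcal I$.

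First I would handle $\gamma_R$. Since $S_j\subseteq\mathcal I$ and $\boldsymbol{\delta}_{\gamma}>\boldsymbol 0$, one has $A_j\boldsymbol{\delta}_{\gamma}\le\boldsymbol 1\boldsymbol{\delta}_{\gamma}$ for every $j$; substituting the constraint $A_j\boldsymbol{\delta}_{\gamma}=\gamma A_j\boldsymbol q$ gives $\boldsymbol 1\boldsymbol{\delta}_{\gamma}\ge\gamma\,A_j\boldsymbol q$ for all $j$, hence $\boldsymbol 1\boldsymbol{\delta}_{\gamma}\ge\gamma\max_j A_j\boldsymbol q$. Because $\gamma_R=\min_j 1/(A_j\boldsymbol q)=1/\max_j A_j\boldsymbol q$, evaluating at $\gamma=\gamma_R$ yields $\boldsymbol 1\boldsymbol{\delta}_{\gamma_R}\ge 1$.

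Next I would handle $\gamma_L$ by summing the constraints over $j$ rather than using them individually: $\sum_{j=1}^{J}A_j\boldsymbol{\delta}_{\gamma}=\sum_{i\in\mathcal I}a(i)\delta_{\gamma}(i)$ and $\sum_{j=1}^{J}A_j\boldsymbol q=\boldsymbol 1\mathbf A\boldsymbol q$, so $\sum_{i}a(i)\delta_{\gamma}(i)=\gamma\,\boldsymbol 1\mathbf A\boldsymbol q$. Using $a(i)\ge 1$ and $\delta_{\gamma}(i)>0$, the left-hand side dominates $\sum_{i}\delta_{\gamma}(i)=\boldsymbol 1\boldsymbol{\delta}_{\gamma}$, hence $\boldsymbol 1\boldsymbol{\delta}_{\gamma}\le\gamma\,\boldsymbol 1\mathbf A\boldsymbol q$; at $\gamma=\gamma_L=(\boldsymbol 1\mathbf A\boldsymbol q)^{-1}$ this reads $\boldsymbol 1\boldsymbol{\delta}_{\gamma_L}\le 1$.

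I do not anticipate a genuine obstacle. Existence of the solutions $\boldsymbol{\delta}_{\gamma_L},\boldsymbol{\delta}_{\gamma_R}$ is part of the hypothesis (and is supplied by Theorem~\ref{ThGamma} through IPF($\gamma$) whenever $\mathcal{L}_{\boldsymbol \delta}(\mathbf A,\boldsymbol q,\gamma)$ is nonempty), so no monotonicity or continuity of $\gamma\mapsto\boldsymbol 1\boldsymbol{\delta}_{\gamma}$ has to be established for this lemma — only the two endpoint inequalities. The single point worth flagging is the standing assumption that the generating class covers $\mathcal I$, equivalently $a(i)\ge 1$ for all $i$: this is exactly what makes the $\gamma_L$ estimate work, whereas the $\gamma_R$ estimate uses only $S_j\subseteq\mathcal I$ and positivity of $\boldsymbol{\delta}_{\gamma}$. (It is also immediate from these formulas that $\gamma_L\le\gamma_R$, so $[\gamma_L,\gamma_R]$ is a legitimate bracketing interval, though this is not needed for the statement.)
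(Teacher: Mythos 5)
Your proof is correct and follows essentially the same route as the paper: the $\gamma_L$ bound comes from $\boldsymbol 1\boldsymbol\delta_\gamma\le\boldsymbol 1\mathbf A\boldsymbol\delta_\gamma=\gamma\,\boldsymbol 1\mathbf A\boldsymbol q$ (each column of $\mathbf A$ containing at least one $1$, i.e.\ $a(i)\ge 1$), and the $\gamma_R$ bound from $A_j\boldsymbol\delta_\gamma\le\boldsymbol 1\boldsymbol\delta_\gamma$ for the maximizing $j$. The covering assumption you flag is exactly the hypothesis the paper invokes.
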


\begin{proof}
Each column of $\mathbf{A}$ contains at least one $1$, and thus
$$\boldsymbol 1\boldsymbol \delta_{\gamma} \, \leq  \boldsymbol 1\mathbf{A}\boldsymbol \delta_{\gamma}= \gamma \boldsymbol 1\mathbf{A}\boldsymbol q,$$
which implies the first inequality.  Since $\mathbf{A}$ is a 0-1 matrix, 
$$\gamma \mbox{max } \{{A}_1 \boldsymbol q, \dots, {A}_J \boldsymbol q\} = \mbox{max } \{{A}_1 \boldsymbol \delta_{\gamma}, \dots, {A}_J \boldsymbol \delta_{\gamma}\} \, \leq \boldsymbol 1\boldsymbol \delta_{\gamma}. $$
Using that   
$$(\mbox{max } \{{A}_1 \boldsymbol q, \dots, {A}_J \boldsymbol q\})^{-1}=\mbox{min } \{1/{A}_1 \boldsymbol q, \dots, 1/{A}_J \boldsymbol q\},$$ 
implies the second inequality.
\end{proof}

The next lemma is needed for the proof of convergence of G-IPF.

\begin{lemma} \label{Continuity}
The solution, $\boldsymbol \delta_{\gamma}$, to the system of equations (\ref{system}) 
and thus its total, $\boldsymbol 1 \boldsymbol \delta_{\gamma}$, are continuous functions of $\gamma$. 
\end{lemma}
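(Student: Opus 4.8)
The plan is to pass to the natural (log-linear) parametrization of the model and then read off continuity from the inverse function theorem. Since $\mathbf{D}$ is a kernel basis matrix of $\mathbf{A}$, its rows span $Ker(\mathbf{A})$, so $\mathbf{D}\log\boldsymbol\delta = \boldsymbol 0$ holds if and only if $\log\boldsymbol\delta$ is orthogonal to $Ker(\mathbf{A})$, i.e. $\log\boldsymbol\delta = \mathbf{A}'\boldsymbol\beta$ for some $\boldsymbol\beta\in\mathbb{R}^J$, equivalently $\boldsymbol\delta = \exp(\mathbf{A}'\boldsymbol\beta)\in\mathcal{M}(\mathbf{A})$. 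Replacing $\mathbf{A}$ by a maximal linearly independent set of its rows changes neither $R(\mathbf{A})$, nor $Ker(\mathbf{A})$, nor the system (\ref{system}) — a discarded row is a linear combination of the retained ones, and both sides of $\mathbf{A}\boldsymbol\delta = \gamma\mathbf{A}\boldsymbol q$ transform identically — so I may assume $\mathbf{A}$ has full row rank $J$, and that the (unchanged) matrix $\mathbf{D}$ is still a kernel basis matrix of it.

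Next I would introduce the potential $\Phi(\boldsymbol\beta) = \boldsymbol 1\exp(\mathbf{A}'\boldsymbol\beta) = \sum_{i\in\mathcal I}\exp\!\big((\mathbf{A}'\boldsymbol\beta)(i)\big)$ on $\mathbb{R}^J$. A direct computation gives $\nabla\Phi(\boldsymbol\beta) = \mathbf{A}\exp(\mathbf{A}'\boldsymbol\beta)$ and $\nabla^2\Phi(\boldsymbol\beta) = \mathbf{A}\,\mathrm{diag}\!\big(\exp(\mathbf{A}'\boldsymbol\beta)\big)\,\mathbf{A}'$; since the diagonal matrix is positive definite and $\mathbf{A}$ has full row rank, this Hessian is positive definite for every $\boldsymbol\beta$. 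Hence $\Phi$ is smooth and strictly convex, so $\nabla\Phi\colon\mathbb{R}^J\to\mathbb{R}^J$ is injective and has an invertible Jacobian at every point; by the inverse function theorem $\nabla\Phi$ is a diffeomorphism from $\mathbb{R}^J$ onto an open set $V := \nabla\Phi(\mathbb{R}^J)$, with continuous (indeed smooth) inverse.

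Writing $\boldsymbol\delta = \exp(\mathbf{A}'\boldsymbol\beta)$, the system (\ref{system}) becomes $\nabla\Phi(\boldsymbol\beta) = \gamma\,\mathbf{A}\boldsymbol q$: it is solvable precisely when $\gamma\,\mathbf{A}\boldsymbol q\in V$, and then the solution is the unique (and automatically positive) vector $\boldsymbol\beta_\gamma = (\nabla\Phi)^{-1}(\gamma\,\mathbf{A}\boldsymbol q)$. Since $\gamma\mapsto\gamma\,\mathbf{A}\boldsymbol q$ is linear and $(\nabla\Phi)^{-1}$ is continuous on $V$, the composition $\gamma\mapsto\boldsymbol\beta_\gamma$ is continuous on the (open) set of those $\gamma>0$ for which (\ref{system}) has a solution; therefore $\boldsymbol\delta_\gamma = \exp(\mathbf{A}'\boldsymbol\beta_\gamma)$ and its total $\boldsymbol 1\boldsymbol\delta_\gamma = \Phi(\boldsymbol\beta_\gamma)$ are continuous in $\gamma$. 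The uniqueness of $\boldsymbol\delta_\gamma$ used here is the one already recorded around Theorem \ref{ThGamma}: $\mathcal{M}(\mathbf{A})\cap\mathcal{L}_{\boldsymbol\delta}(\mathbf{A},\boldsymbol q,\gamma)$ consists of a single point.

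The step that needs care is the first one — justifying the equivalence $\mathbf{D}\log\boldsymbol\delta = \boldsymbol 0 \Leftrightarrow \boldsymbol\delta = \exp(\mathbf{A}'\boldsymbol\beta)$ and checking that discarding dependent rows of $\mathbf{A}$ leaves (\ref{system}) intact; after that the conclusion is a routine application of the inverse function theorem to the smooth, strictly convex potential $\Phi$. A more pedestrian alternative is a compactness argument: if $\gamma_n\to\gamma$, then $\{\boldsymbol\delta_{\gamma_n}\}$ is bounded (each coordinate is at most $\max_j A_j\boldsymbol\delta_{\gamma_n} = \gamma_n\max_j A_j\boldsymbol q$, as in the proof of Lemma \ref{monoton}), every cluster point solves (\ref{system}) at $\gamma$ by continuity, and uniqueness pins it down to $\boldsymbol\delta_\gamma$ — but there one must separately rule out cluster points having a vanishing coordinate, which is precisely the difficulty the parametrization-based proof avoids, $\exp$ being positive by construction.
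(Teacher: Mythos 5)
Your proof is correct, but it takes a genuinely different route from the paper. The paper argues by compactness: for $\gamma_n\to\gamma_0$ it bounds the sequence $\{\boldsymbol\delta_{\gamma_n}\}$ (essentially as in Lemma \ref{monoton}), extracts a convergent subsequence, passes to the limit in the system (\ref{system}) ``by continuity of the logarithm and matrix multiplication,'' and invokes uniqueness of the solution to identify the limit with $\boldsymbol\delta_{\gamma_0}$ --- i.e.\ exactly the ``pedestrian alternative'' you sketch in your last paragraph. You instead pass to the log-linear parametrization, observe that $\mathbf{D}\log\boldsymbol\delta=\boldsymbol 0$ is equivalent to $\boldsymbol\delta=\exp(\mathbf{A}'\boldsymbol\beta)$, and apply the inverse function theorem to the strictly convex potential $\Phi(\boldsymbol\beta)=\boldsymbol 1\exp(\mathbf{A}'\boldsymbol\beta)$, so that $\boldsymbol\delta_\gamma=\exp\bigl(\mathbf{A}'(\nabla\Phi)^{-1}(\gamma\mathbf{A}\boldsymbol q)\bigr)$ is manifestly continuous (indeed smooth) on an open set of $\gamma$'s. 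What your approach buys is precisely the point you flag: positivity of the limit is automatic, whereas the compactness argument must separately rule out cluster points on the boundary of the positive orthant (where $\log$ is not continuous), a step the paper's proof does not address; the subsequence argument as written also only yields convergence along a subsequence unless one adds the standard every-subsequence-has-a-further-subsequence remark. What the paper's approach buys is brevity and fewer prerequisites (no reduction to full row rank, no inverse function theorem). Your reduction to a full-row-rank $\mathbf{A}$ and the equivalence $\mathbf{D}\log\boldsymbol\delta=\boldsymbol 0\Leftrightarrow\log\boldsymbol\delta\in R(\mathbf{A}')$ are both sound (the latter is just the paper's dual representation), so no gap remains in your argument.
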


\begin{proof}
Fix an arbitrary $\gamma_0 > 0$. It will be shown that for a sequence $\{\gamma_n\} > 0$, which converges to $\gamma_0$, as $n \to \infty$,  $\boldsymbol \delta_{\gamma_n} \to \boldsymbol \delta_{\gamma_0}$.

Since  $\gamma_n$ converges, there exists a finite cover of $\{\gamma_n\}$ and so  does a finite cover of the sequence of segments $\{[0,\gamma_n]\}$. Since $\boldsymbol 1 \boldsymbol q = 1$, $\boldsymbol \delta_{\gamma_n} \leq \gamma_n$, for every $n \geq 1$. Hence, there exists a finite cover of the sequence $\{\boldsymbol \delta_{\gamma_n}\}$, implying that the set $\{\boldsymbol \delta_{\gamma_n}\}$ is compact, and, therefore,  there exists a subsequence $\{\boldsymbol \delta_{\gamma_{n_k}}\}$ that converges to a $\boldsymbol \delta^*$. By continuity of the logarithm and matrix multiplication,
$$\mathbf{A} \boldsymbol \delta^* = \gamma_0 \mathbf{A} \boldsymbol q, \quad \mathbf{D} \mbox{log } \boldsymbol \delta^* = \boldsymbol 0.$$
Since the solution of this system is unique, $\boldsymbol \delta^* = \boldsymbol \delta_{\gamma_0}$, and thus $\boldsymbol \delta_{\gamma_{n_k}} \to \boldsymbol \delta_{\gamma_0}$. \end{proof}

The next theorem states that G-IPF produces maximum likelihood estimates of the cell parameters under a relational model, whether or not the overall effect is present. 

\begin{theorem}\label{newGIPFconvTh}
Assume that the maximum likelihood estimate of $\boldsymbol \delta$ under the relational model generated by $\mathbf{A}$ exists and is unique. Then $\hat{\boldsymbol{\delta}}$, obtained from G-IPF, is the maximum likelihood estimate of $\boldsymbol \delta$.
\end{theorem}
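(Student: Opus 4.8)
The plan is to split into the two cases already distinguished by the G-IPF algorithm. If $\boldsymbol\delta\equiv\boldsymbol\lambda$ (Poisson case), then G-IPF simply returns the output of IPF($1$), and by Corollary~\ref{corAllIPF}\eqref{Cor2} this output is the MLE $\hat{\boldsymbol\lambda}$; nothing further is needed. Similarly, if $\boldsymbol\delta\equiv\boldsymbol p$ and the call IPF($1$) happens to produce $\boldsymbol p^*$ with $\boldsymbol 1\boldsymbol p^*=1$, then the overall effect is present (otherwise the total would not be $1$), and Corollary~\ref{corAllIPF}\eqref{Cor1} identifies $\boldsymbol p^*$ with $\hat{\boldsymbol p}$. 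So the entire content of the theorem is the remaining multinomial case in which IPF($1$) returns a $\boldsymbol p^*$ with $\boldsymbol 1\boldsymbol p^*\neq 1$, i.e. the model has no overall effect, and the bisection loop is invoked.

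In that case, recall from~\eqref{linF1c} that $\hat{\boldsymbol p}$ is the unique common point of $\mathcal{M}(\mathbf A)$ and $\mathcal{L}_{\boldsymbol\delta}(\mathbf A,\boldsymbol q,\gamma^*)$ for the unique adjustment factor $\gamma^*>0$ with $\boldsymbol 1\hat{\boldsymbol p}=1$. By Theorem~\ref{ThGamma}, for each $\gamma$ the output $\boldsymbol\delta_\gamma^*$ of IPF($\gamma$) is exactly the solution of the system~\eqref{system}, so to prove the theorem it suffices to show that the bisection loop converges to $\gamma^*$, equivalently that $g(\gamma):=\boldsymbol 1\boldsymbol\delta_\gamma^*$ is a continuous function hitting the value $1$ inside the starting bracket $[\gamma_L,\gamma_R]$, and that bisection on it is valid. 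Continuity of $g$ is Lemma~\ref{Continuity}. Lemma~\ref{monoton} gives $g(\gamma_L)\le 1\le g(\gamma_R)$, so by the intermediate value theorem there is some $\gamma^\circ\in[\gamma_L,\gamma_R]$ with $g(\gamma^\circ)=1$; since the solution of~\eqref{system} with total $1$ is the MLE and the adjustment factor realizing total $1$ is unique, $\gamma^\circ=\gamma^*$. Finally, each bisection step preserves the bracketing property $g(\gamma_L)\le 1\le g(\gamma_R)$ (the sign test in Step~2 is exactly the update rule that keeps the root enclosed), the interval length halves, so $(\gamma_L+\gamma_R)/2\to\gamma^*$, and by Lemma~\ref{Continuity} the returned $\boldsymbol\delta_{(\gamma_L+\gamma_R)/2}^*\to\boldsymbol\delta_{\gamma^*}^*=\hat{\boldsymbol p}$.

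One technical point to handle carefully is that the loop, as written, terminates only when $\boldsymbol 1\boldsymbol\delta_{(\gamma_L+\gamma_R)/2}^*=1$ exactly, which in general happens only in the limit; so the statement should be read as: the sequence of midpoints generated by the loop converges to $\gamma^*$ and the corresponding IPF($\gamma$) outputs converge to $\hat{\boldsymbol p}$. (The genuinely finite, precision-$\varepsilon$ version is the subject of the separately announced numerical variant.) A second point worth a remark is monotonicity of $g$: the bracket-preservation argument does not actually require $g$ to be monotone — continuity plus the sign test suffice — but if one wants to assert that the enclosed root is unique within $[\gamma_L,\gamma_R]$ one should invoke the uniqueness of the adjustment factor from~\eqref{linF1c} rather than trying to prove monotonicity of $g$ directly.

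The main obstacle is not any single hard estimate but rather assembling the bookkeeping cleanly: one must invoke Theorem~\ref{ThGamma} to turn ``output of IPF($\gamma$)'' into ``solution of~\eqref{system}'', then Lemmas~\ref{monoton} and~\ref{Continuity} to run the intermediate value / bisection argument on $g$, and finally the characterization~\eqref{linF1c} of the MLE of a curved relational model together with uniqueness of the adjustment factor to conclude that the limit of the bisection is precisely $\hat{\boldsymbol p}$ and not merely \emph{some} element of $\mathcal{M}(\mathbf A)$ with total $1$. Stating the convergence in the right (limiting) sense, as noted above, is the only place where care is needed.
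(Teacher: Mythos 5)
Your proposal is correct and follows essentially the same route as the paper's proof: the same case split, then Lemma~\ref{monoton} for the sign change at the endpoints, Lemma~\ref{Continuity} for continuity of $\gamma\mapsto\boldsymbol 1\boldsymbol\delta_\gamma^*$, the intermediate value theorem plus bisection, and uniqueness of the MLE to identify the limit with $\hat{\boldsymbol p}$. Your added remarks (reading the loop's termination in the limiting sense, and that monotonicity of $g$ is not needed) are sensible refinements of the same argument rather than a different approach.
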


\begin{proof}

In the case when $\boldsymbol \delta \equiv \boldsymbol \lambda$, G-IPF computes  $\hat{\boldsymbol{\lambda}}$ and $\hat{\boldsymbol{\theta}}$ using IPF(1). By Corollary \ref{corAllIPF},  $\hat{\boldsymbol{\lambda}}$ is the MLE of the cell intensities.
 
In the case when $\boldsymbol \delta \equiv \boldsymbol p$, G-IPF first calls IPF(1) and computes a vector of probabilities $\boldsymbol{p}^*$. 

If $\boldsymbol 1 \boldsymbol{p}^* = 1$, which happens if and only if $\boldsymbol 1 \in R(\mathbf{A})$, then, by Corollary \ref{corAllIPF}, $\hat{\boldsymbol{p}} = \boldsymbol{p}^*$ is the MLE of the cell probabilities.

Suppose $\boldsymbol 1 \boldsymbol{p}^* \ne 1$. By Lemma \ref{monoton}, the function $\boldsymbol 1 \boldsymbol{p}_{\gamma} - 1$ has opposite signs at the endpoints of the segment $[(\boldsymbol 1\mathbf{A}\boldsymbol q)^{-1}, \,\, {\mbox{min }} \{1/{A}_1 \boldsymbol q, \dots, 1/{A}_J \boldsymbol q \}]$. Since $\boldsymbol 1 \boldsymbol{p}_{\gamma} - 1$ is a continuous function of $\gamma$, the segment contains a
value 
\begin{equation}\label{gammaStar}
\gamma^*, \mbox{ such that } \boldsymbol 1 \boldsymbol{p}_{\gamma^*} = 1.
\end{equation}
In order to find $\gamma^*$ the algorithm uses the bisection method \citep[cf.][]{Boyd}. By the uniqueness of the MLE, $\hat{\boldsymbol p} = \boldsymbol{p}_{\gamma^*}$. 
\end{proof}

\vspace{3mm}

A reviewer to an earlier version of the paper pointed out that the above algorithm cannot be programmed because any implementation of IPF($\gamma$) will return the value of $\boldsymbol \delta_{\gamma}^*$ with some error. A programmable variant of the algorithm is obtained by choosing a desired precision $\epsilon > 0$.  Define IPF($\gamma, \epsilon$) by replacing {\tt Steps 3} and {\tt 4} of IPF($\gamma$) by the following steps:
\begin{itemize}
\item[] {\tt Step 3}: While  $|\gamma A_{j}\boldsymbol{q} - A_{j}\boldsymbol{\delta}_{\gamma}^{(d+1)}| > \epsilon$ for at least one $j$,  set $d = d+1$, go to {\tt Step 1}.
\item[] {\tt Step 4}: Set $\boldsymbol{\delta}_{\gamma, \epsilon}^{*}=\boldsymbol{\delta}_{\gamma}^{(d)}$, and finish. 
\end{itemize}
The sequence of vectors $\boldsymbol{\delta}_{\gamma, \epsilon}^{*}$, obtained using IPF($\gamma$, $\epsilon$), converges pointwise, as $\epsilon \to 0$, to  $\boldsymbol{\delta}_{\gamma}^*$, the result of IPF($\gamma$).

\vspace{5mm}

The G-IPF($\epsilon$) algorithm described next computes the maximum likelihood estimates of cell parameters under a relational model with a given precision. 

\vspace{1mm}


\begin{center} \textbf{G-IPF($\epsilon$) Algorithm:} \end{center}

\begin{itemize}
\item[] If $\boldsymbol \delta \equiv \boldsymbol \lambda$,  compute $\hat{\boldsymbol{\lambda}}_{\epsilon}$ using IPF(1, $\epsilon$), and finish.
\item[] If $\boldsymbol \delta \equiv \boldsymbol p$, compute $\boldsymbol{p}^*_{\epsilon}$ using IPF($1$, $\epsilon$). \\
If $|\boldsymbol 1\boldsymbol{p}^*_{\epsilon} - 1| < \epsilon$, set $\hat{\boldsymbol p}_{\epsilon} = \boldsymbol p^*_{\epsilon}$, and finish. 
Otherwise, set $T = 1$, \\
{compute} $\gamma_{L} = (\boldsymbol 1\mathbf{A}\boldsymbol{q})^{-1}$, $\gamma_{R} = \mbox{min } \{1/A_1\boldsymbol{q}, \dots, 1/A_J \boldsymbol{q}\}$, and proceed as follows: 
\item[] {\tt Step 1}: for $t = 0, \dots, T$, let $\gamma_t = \gamma_L + \frac{t}{T}(\gamma_R - \gamma_L)$,  and compute $\boldsymbol p_{\gamma_t, \epsilon}^*$ using IPF($\gamma_t$, $\epsilon/T$). Whenever $|\boldsymbol 1 \boldsymbol p_{\gamma_t, \epsilon}^* - 1| < \epsilon$, go to {\tt Step 3}.
\item[] {\tt Step 2}: Set $T = T+1$, go to {\tt Step 1}.
\item[] {\tt Step 3}: Set  $\hat{\boldsymbol{p}}_{\epsilon}= \boldsymbol p_{{\gamma}_{t},\epsilon}^*$, $\hat{\gamma}_{\epsilon} = \gamma_t$, and finish. \qed
\end{itemize}

The next theorem states that G-IPF($\epsilon$) produces maximum likelihood estimates of the cell parameters under a relational model with precision $\epsilon$.

\begin{theorem}\label{G_IPFepsConv}
Assume that the maximum likelihood estimate of $\boldsymbol \delta$ under the relational model generated by $\mathbf{A}$ exists and is unique. 
For every $\epsilon > 0$, G-IPF($\epsilon$) terminates and returns $\hat{\boldsymbol{\delta}}_{\epsilon}$ and $\hat{\gamma}_{\epsilon}$, which satisfy:
\begin{align}
&|\mathbf{A} \hat{\boldsymbol{\delta}}_{\epsilon}  - \hat{\gamma}_{\epsilon}\mathbf{A} {\boldsymbol{q}}| < \epsilon; \label{EpsEq} \\
&\mathbf{D} \mbox{log } \hat{\boldsymbol{\delta}}_{\epsilon}= 0; \label{EpsDlog} \\
&|\boldsymbol 1 \hat{\boldsymbol{\delta}}_{\epsilon} - 1| < \epsilon \quad (\mbox{for } \boldsymbol \delta \equiv \boldsymbol p).  \label{normalizEps}
\end{align}
\end{theorem}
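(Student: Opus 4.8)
The plan is to prove the three displayed properties by analyzing each of the two branches of G-IPF($\epsilon$) separately, and within the probability branch, the two sub-cases determined by whether $|\boldsymbol 1\boldsymbol{p}^*_{\epsilon} - 1| < \epsilon$.

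\medskip

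\emph{Termination.} First I would argue that G-IPF($\epsilon$) always halts. For the Poisson branch and the first sub-case of the probability branch this is immediate, since only one call to IPF($1,\epsilon$) is made, and IPF($\gamma,\epsilon$) itself terminates in finitely many iterations (its stopping criterion in {\tt Step 3} is eventually met, because $\boldsymbol\delta_\gamma^{(d)} \to \boldsymbol\delta_\gamma^*$ by Theorem \ref{ThGamma} and $\boldsymbol\delta_\gamma^*$ exactly satisfies $\mathbf{A}\boldsymbol\delta_\gamma^* = \gamma\mathbf{A}\boldsymbol q$, so all the quantities $|\gamma A_j\boldsymbol q - A_j\boldsymbol\delta_\gamma^{(d+1)}|$ fall below $\epsilon$ for $d$ large). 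For the nontrivial sub-case, termination of the outer loop (the $T = T+1$ loop) is the crux. Here I would invoke Lemma \ref{monoton} and Lemma \ref{Continuity}: the true total $\boldsymbol 1\boldsymbol p_\gamma$ is a continuous function of $\gamma$ on the compact segment $[\gamma_L,\gamma_R]$, hence uniformly continuous, and it passes through $1$ at the interior point $\gamma^*$ from \eqref{gammaStar}. By uniform continuity there is a $\rho > 0$ such that $|\boldsymbol 1\boldsymbol p_\gamma - 1| < \epsilon/2$ for all $\gamma$ within $\rho$ of $\gamma^*$. Combining this with the statement that $\boldsymbol\delta_{\gamma,\epsilon}^* \to \boldsymbol\delta_\gamma^*$ pointwise as $\epsilon\to 0$ (the remark preceding the algorithm), and in fact that the error $\boldsymbol\delta_{\gamma,\epsilon}^* - \boldsymbol\delta_\gamma^*$ can be bounded in terms of the tolerance used, one shows that once $T$ is large enough that the grid spacing $(\gamma_R-\gamma_L)/T$ is below $\rho$ \emph{and} the per-call tolerance $\epsilon/T$ is small enough to make $|\boldsymbol 1\boldsymbol p_{\gamma_t,\epsilon}^* - \boldsymbol 1\boldsymbol p_{\gamma_t}| < \epsilon/2$, there will be a grid point $\gamma_t$ near $\gamma^*$ with $|\boldsymbol 1\boldsymbol p_{\gamma_t,\epsilon}^* - 1| < \epsilon$, triggering {\tt Step 3}. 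Thus the loop exits for some finite $T$.

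\medskip

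\emph{The three inequalities.} Once termination is established, \eqref{EpsEq} is essentially built into the stopping rule: G-IPF($\epsilon$) returns $\hat{\boldsymbol\delta}_\epsilon = \boldsymbol p_{\gamma_t,\epsilon}^*$ (or $\hat{\boldsymbol\lambda}_\epsilon$, or $\boldsymbol p^*_\epsilon$ with $\hat\gamma_\epsilon = 1$), and IPF($\gamma_t,\epsilon/T$) only exits when $|\gamma_t A_j\boldsymbol q - A_j\boldsymbol\delta_{\gamma_t}^{(d+1)}| \le \epsilon/T \le \epsilon$ for every $j$, which is exactly \eqref{EpsEq} written componentwise with $\hat\gamma_\epsilon = \gamma_t$. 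Property \eqref{EpsDlog} holds because the update rule \eqref{RipfGamma} preserves $\mathbf{D}\log\boldsymbol\delta_\gamma^{(d)} = \boldsymbol 0$ at every step — this is exactly the induction carried out in part (ii) of the proof of Theorem \ref{ThGamma}, and it applies verbatim to the truncated sequence produced by IPF($\gamma,\epsilon$) since truncation just stops the same sequence earlier. Finally, \eqref{normalizEps} holds in the probability branch: in the first sub-case it is the condition $|\boldsymbol 1\boldsymbol p^*_\epsilon - 1| < \epsilon$ under which that branch returns, and in the nontrivial sub-case it is precisely the {\tt Step 1} test $|\boldsymbol 1\boldsymbol p_{\gamma_t,\epsilon}^* - 1| < \epsilon$ that routed control to {\tt Step 3}.

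\medskip

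\emph{Main obstacle.} The routine bookkeeping (properties \eqref{EpsEq}, \eqref{EpsDlog}, and the two easy cases of \eqref{normalizEps}) is immediate from the algorithm's stopping rules together with Theorem \ref{ThGamma}. The real work — and the step I expect to require the most care — is proving that the outer $T$-loop terminates, because this needs a \emph{uniform} control: one must know not only that each IPF($\gamma_t,\epsilon/T$) returns something close to $\boldsymbol p_{\gamma_t}$, but that the closeness improves fast enough relative to how finely $\gamma$ must be sampled to catch the sign change of $\boldsymbol 1\boldsymbol p_\gamma - 1$ near $\gamma^*$. The two ingredients that make this go through are Lemma \ref{Continuity} (continuity, hence uniform continuity on the compact bracket, of $\gamma\mapsto\boldsymbol 1\boldsymbol p_\gamma$) and the pointwise convergence $\boldsymbol\delta_{\gamma,\epsilon}^*\to\boldsymbol\delta_\gamma^*$ as $\epsilon\to 0$; assembling them into a clean finiteness argument for $T$ is the technical heart of the proof.
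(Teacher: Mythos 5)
Your proof mirrors the paper's: the three displayed properties are read off the stopping rules of IPF($\gamma,\epsilon$) and the invariance of $\mathbf{D}\,\mbox{log }\boldsymbol{\delta}^{(d)}$ already established in part (ii) of Theorem \ref{ThGamma}, while termination of the outer loop is obtained exactly as in the paper by refining the grid on $[\gamma_L,\gamma_R]$ (justified by Lemma \ref{monoton}) until a grid point lands close enough to the true adjustment factor, with Lemma \ref{Continuity} supplying the continuity of $\gamma\mapsto\boldsymbol 1\boldsymbol p_{\gamma}$. The only difference is that you explicitly flag the need to bound $|\boldsymbol 1\boldsymbol p^{*}_{\gamma_t,\epsilon}-\boldsymbol 1\boldsymbol p_{\gamma_t}|$ in terms of the per-call tolerance $\epsilon/T$ --- a step the paper's proof passes over silently --- so your version is, if anything, slightly more careful on the one genuinely delicate point.
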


\begin{proof}

Fix an $\epsilon > 0$. When $\boldsymbol \delta \equiv \boldsymbol \lambda$, G-IPF($\epsilon$) computes  $\hat{\boldsymbol{\lambda}}_{\epsilon}$ using IPF($1$, $\epsilon$). Since IPF($1$, $\epsilon$) terminates, so does G-IPF($\epsilon$). The termination condition of IPF($1$, $\epsilon$) implies (\ref{EpsEq}), with $\hat{\gamma}_{\epsilon} = 1$. 

In the case when $\boldsymbol \delta \equiv \boldsymbol p$, G-IPF($\epsilon$) first calls IPF($1$, $\epsilon$), which terminates and results in $\boldsymbol{p}^*_{\epsilon}$, for which (\ref{EpsEq}) holds with $\hat{\gamma}_{\epsilon} = 1$. If (\ref{normalizEps}) also holds, G-IPF($\epsilon$) is terminated after the first iteration with $\hat{\boldsymbol p}_{\epsilon} = \boldsymbol{p}^*_{\epsilon}$. 

Suppose $|\boldsymbol 1 \boldsymbol{p}^*_{\epsilon} - 1| > \epsilon$, and thus G-IPF($\epsilon$) does not stop after the first iteration. By  Lemma \ref{monoton}, the (unknown) adjustment factor corresponding to the MLE, say $\hat{\gamma}$, belongs to the segment  $[\gamma_L, \gamma_R]$. By Lemma \ref{Continuity},  $\,\boldsymbol 1 \boldsymbol p_{\gamma} - 1$ is a continuous function of $\gamma$, and, since $\boldsymbol 1\boldsymbol p_{\hat{\gamma}} = 1$, for every $T_0 \in \mathbb{Z}_{>0}$, there exists a $t_0 \in \{0 ,\dots, T_0\}$, such that $\hat{\gamma} \in [\gamma_{t_0}, \gamma_{t_0+1}]$, where $\gamma_{t_0} = \gamma_L + \frac{t_0}{T_0}(\gamma_R - \gamma_L)$. As $T_0 \to \infty$, the length of the interval $[\gamma_{t_0}, \gamma_{t_0+1}]$ approaches $0$, and thus, for a large enough $T_0$, both, $\boldsymbol p_{t_0, \epsilon}^*$ and $\boldsymbol p_{t_0+1, \epsilon}^*$, obtained from IPF($\gamma_{t_0}$, $\epsilon/T_0$)  and  IPF($\gamma_{t_0+1}$, $\epsilon/T_0$), respectively, satisfy (\ref{normalizEps}). Thus, G-IPF($\epsilon$) will terminate at the $(T_0+1)$-st iteration with $\hat{\boldsymbol p}_{\epsilon}$ that fulfills (\ref{EpsEq}), with some $\hat{\gamma}_{\epsilon}$, and  (\ref{normalizEps}).

Since the multiplicative structure of the distributions does not change during the iterations, (\ref{EpsDlog}) holds. 
\end{proof}

The theorem implies that the sequence of vectors $\hat{\boldsymbol{\delta}}_{\epsilon}$, obtained using G-IPF($\epsilon$), converges pointwise, as $\epsilon \to 0$, to the maximum likelihood estimate of $\boldsymbol \delta$. 

The G-IPF($\epsilon$) algorithm is implemented in the R-package {\tt{gIPFrm}} \citep{gIPFpackage}. In this implementation, there is also an option to update the adjustment factor using the bisection method.  Alternatively, any numerical technique available for monotone functions \citep[cf.][]{Boyd} may be used.



\nocite{Mccallum2000}
\nocite{Yin}

\nocite{Yin}
\nocite{ClassenDrivErr}

\section*{Acknowledgments}

Part of the material presented here was contained in the PhD thesis of the first author to which the second author and Thomas Richardson were advisers.
The authors wish to thank him for several comments and suggestions. The proof of Proposition \ref{featuresVSoverEff} uses the idea of Olga Klimova, to whom the authors are also indebted. The second author was supported in part by Grant K-106154 from the Hungarian National Scientific Research Fund (OTKA).

\bibliographystyle{plainnat}

\bibliography{uwthesis0507}

\begin{figure}
\label{sumsIIS}
\begin{center}
\includegraphics[scale=0.45]{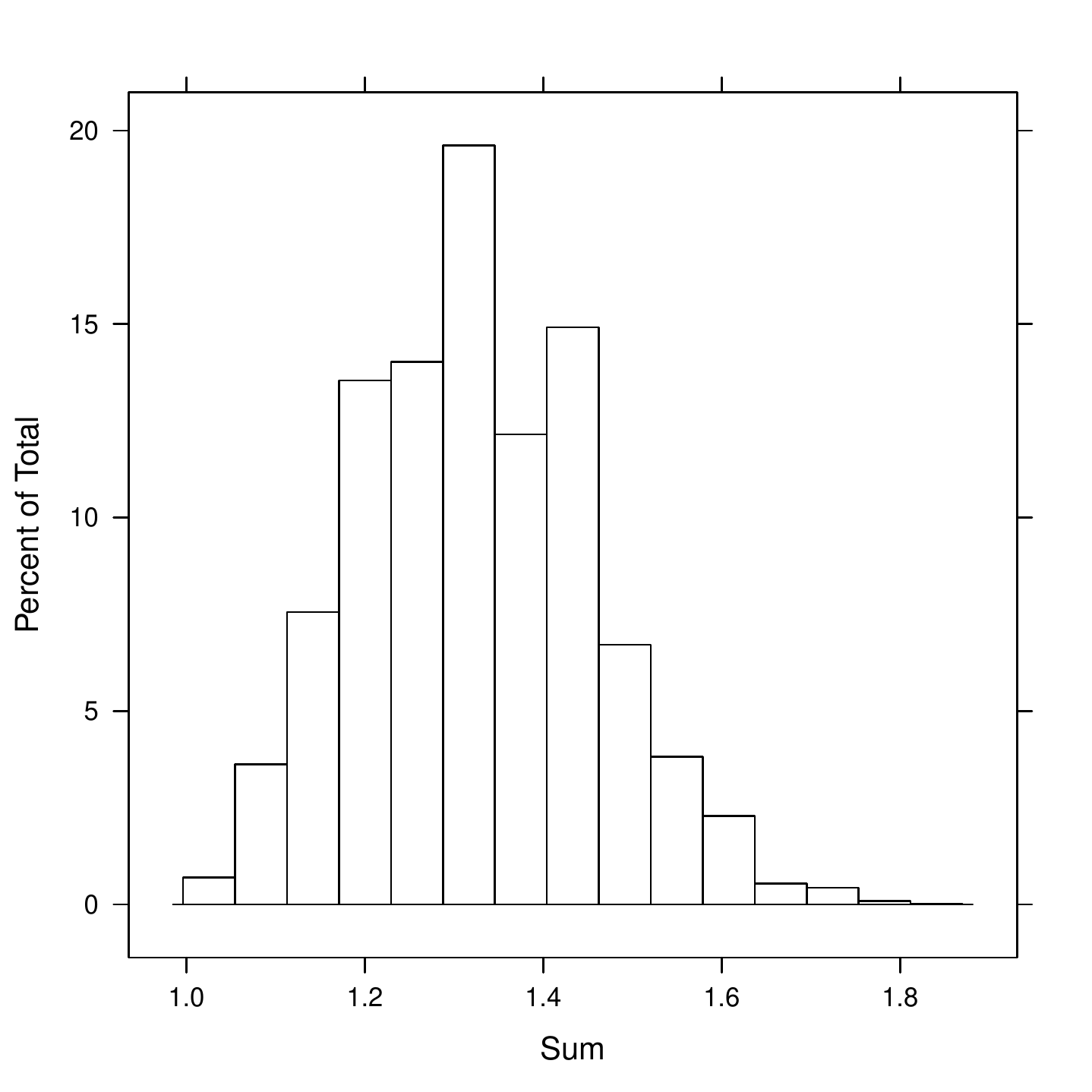}
\end{center}
\caption{Histogram of sums of the limit vectors obtained by IIS, under the model of \cite{AitchSilvey60}, for 13352 generated distributions.}
\end{figure}

\vspace{10mm}

\begin{table}[h]
\centering
\caption{Relational models and properties of the MLE  \citep{KRD11}.}
\label{ModelTypes}
\vspace{5mm}
\setlength{\extrarowheight}{5pt}
{
\begin{tabular}{|m{36mm}|m{34mm}|m{32mm}|m{32mm}|}
\hline
&
\multicolumn{1}{m{34mm}|}{{Models with the overall effect}} &
\multicolumn{2}{m{64mm}|}{ {Models without the overall effect}}\\
\cline{2-4}

 & \multicolumn{1}{c|}{Probabilities $\&$ Intensities} & \multicolumn{1}{c|}{ Probabilities} & \multicolumn{1}{c|}{ Intensities}\\
\hline
{ Exponential family } & \multicolumn{1}{c|}{Regular}	& \multicolumn{1}{c|}{{Curved}} 	 & \multicolumn{1}{c|}{Regular}\\
\hline
{ Subset sums of the MLE vs observed subset sums }&  \multicolumn{1}{c|}{Equal} 	& \multicolumn{1}{c|}{{Proportional}} 	& \multicolumn{1}{c|}{Equal}\\
\hline
{ Adjustment for total } & \multicolumn{1}{c|}{1}	& \multicolumn{1}{c|} {1}& {{Depends on the data}}\\
\hline
{ Adjustment for subset sums } & \multicolumn{1}{c|}{1}	& {{Depends on the data}} 	 & \multicolumn{1}{c|}{1}\\
\hline
\end{tabular}}
\end{table}

\end{document}